\documentclass{article}
\usepackage{algorithm,algpseudocode}
\usepackage[pdfpagelayout=OneColumn]{hyperref}
\usepackage{fancybox,pst-all}

\makeatletter

\def\@abssec#1{\vspace{.05in} \parindent 0in {\bf #1. }\ignorespaces}
\def\abstract{\@abssec{Abstract}}

\def\keywords{\@abssec{Key words}}

\newtheorem{lemma}{Lemma}

\newtheorem{proposition}{Proposition}

\def\combi#1#2{\left(\begin{array}{c} {#1} \\ {#2}\end{array}\right)}
\newcommand{\argmax}{\mathop{\mathrm{argmax}}}

\newcommand{\StatexIndent}[1][3]{%
  \setlength\@tempdima{\algorithmicindent}%
  \Statex\hskip\dimexpr#1\@tempdima\relax}

\algdef{S}[IF]{IfNoThen}[1]{\algorithmicif\ #1}%
\algnewcommand{\IIf}[1]{\State\algorithmicif\ #1\ \algorithmicthen}
\algnewcommand{\EndIIf}{\unskip\ \algorithmicend\ \algorithmicif}

\begin{document}
\title{Uniform random generations and rejection method(I) with binomial majorant}
\author{Laurent Alonso\thanks{INRIA Nancy-Grand Est,
    615, rue du Jardin Botanique,
    54600 Vandoeuvre-l\`es-Nancy, France.
    Email: {\tt Laurent.Alonso@inria.fr}
}}
\maketitle

\begin{abstract}
  We present three simple algorithms to uniformly generate `Fibonacci words' (i.e., some words that are enumerated by Fibonacci numbers), Schröder trees of size $n$ and Motzkin left factors of size $n$ and final height $h$. These algorithms have an average complexity of $O(n)$ in the unit-cost RAM model\footnote{Since we only present algorithms that use small integers in this article, we will use the unit cost RAM model to give the complexity, you can get the logarithmic cost RAM model by multiplying this complexity by a factor $log\,n$.} and use only small integers (integers smaller than $n^2$).
\end{abstract}

\begin{keywords}
Generation, uniform, rejection
\end{keywords}

\section{Introduction}
One of the first generic methods used to uniformly generate various data structures is the {\em recursive method}\cite{Nijen1978v}\cite{Flajolet1994ACF}. It consists in computing the number of structures of different sizes and using these numbers to recursively generate an element. A drawback is that the coefficients are often very large: $\approx C^n$ for some constants $C$ (and so required to link to a library that can compute very large numbers) and the method is very slow. Denise et al.\cite{DENISE1999233} shows that it can be improved by performing lazy floating point operations, this allows to obtain a more efficient algorithm\footnote{This article presents algorithms corresponding to some context-free grammars, the average complexity $O(n^{1+\epsilon})$.} but it remains very difficult to implement.

For some problems: `Fibonacci words', Schröder trees\cite{PENAUD2001345}, Motzkin words\cite{ALONSO1994529},\cite{BACHER201742}, \cite{GouyouBeauchamps2010RandomGU} $\ldots$ rejection methods give efficient algorithms; in general, these methods are based on ingenious one-to-one correspondences and are easy to implement\footnote{In fact, generating a random Motzkin word gives a simple algorithm but concerning other problems, with low probability, these methods require computing floats with very high accuracy, $\ldots$}.
  
We try in this paper to extend the method proposed by \cite{ALONSO1994529}: this method first chooses the number of horizontal steps $m$ with an adequate probability using a rejection method, then generates a random Motzkin word with $m$ of horizontal steps (which is a simple problem). In~\cite{GouyouBeauchamps2010RandomGU}, Gouyou-Beauchamps et all show that this method can be extended to generate colored unary-binary trees with linear average complexity and partial injection with average complexity in $O(n^{5/4})$. Their method consists in choosing a number $m$ with probability $\frac{F_n(m)}{\sum_i F_n(i)}$ by finding a distribution ${\overline B}_n(m)$ such that :
\begin{itemize}
\item it is easy to generate a number $m$ with probability $\frac{{\overline B}_n(m)}{\sum_i {\overline B}_n(i)}$,
\item we have for all $m$, $F_n(m)\leq {\overline B}_n(m)$ and we have an efficient algorithm to accept a value of $m$ with probability $\frac{F_n(m)}{{\overline B}_n(m)}$.
\end{itemize}
This method is quite general, the proposed algorithms are easy to implement and ingenious; however finding a distribution ${\overline B}_n(m)$ that fulfills these properties is not straightforward even if they propose to use for ${\overline B}_n(m)$ a binomial distribution multiplied by a constant.

In this paper, we show that the choice of a binomial distribution (up to a minor modification) close to the final distribution allows to obtain simply an efficient algorithm at least for some basic structures. More precisely, we will show that it is possible to obtain efficient algorithms to randomly and uniformly generate a `Fibonacci word' of size $n$ (the simplest problem), a Schröder tree of size $n$ (with almost the same algorithm) and finally a Motzkin left factor of size $n$ and final height $h$. As in~\cite{GouyouBeauchamps2010RandomGU}, these algorithms use only small integers (i.e. integers smaller than $n^2$), have an average complexity of $O(n)$ and can be implemented very easily. Another interesting property is that almost all these algorithms use on average only $O(\sqrt n)$ operations that involve numbers greater than $n$; only the generation of a Motzkin left factor of size $n$ and final height $h$ when $h\geq n-\frac 1 2-\frac{\sqrt{8n+9}}2$ is different because it uses on average $O(n)$ operations that all involve numbers less than $3n+7$.

In a future paper\cite{AlonsoRandomII}, we will propose another method which consists in drawing the initial value of $m$ differently: it is enough to generate an integer $m$ with the basic uniform distribution (i.e. with probability in $\frac 1 n$). We will show that, at least for simple problems, this leads to very efficient algorithms. Although we will probably choose to use this second method for simple problems, we think that these two papers are interesting because :

\begin{itemize}
\item together, they suggest that the initial values of $m$ can be chosen by different types of methods,
\item these solutions have different properties: in this paper, we propose to guess the first value of $m$ with a distribution close to the final distribution, which leads to few rejections but to many operations in the main loop. The new paper will propose the opposite, so we will have a very high rejection rate while the time spent in a loop is very low. We suspect that in some cases, for example if we want to find an algorithm to choose the number of binary nodes and the number of ternary nodes in $3-ary$ trees, having a high rejection rate is prohibitive,
\item in practice, it is possible to mix the ideas of these two articles: \cite{AlonsoRandomII} to generate a first value of $m$ with a distribution close to the final distribution but still following the methods proposed in this article.
\end{itemize}

In Section~\ref{abstract}, we introduced the notations, an abstract algorithm and the different generating functions that will be used to generate the initial $m$ value. Then, in the Sections~\ref{fibonacci}--\ref{motzkin}, we will see how to instantiate this abstract algorithm to generate `Fibonacci words', Schröder trees of size $n$ and a Motzkin left factor of size $n$ and given final height $h$ and prove that these algorithms choose a value $m$ with an adequate probability and have an average complexity of $O(n)$. 

Finally, after a little discussion, we give in the appendices the proofs of many lemmas. These proofs use only elementary mathematics (and had to be done); as writing detailed proofs would probably be too long and boring, we tried to give just enough details so that they can be checked without too much trouble...

\section{Notations and Abstract Algorithm}\label{abstract}
\subsection{Notations}
In the following we will use:
\begin{itemize}
\item $[P]$ the Kenneth Iverson's convention\cite{Graham1988ConcreteM}(page 24): $[P]=1$ if $P$ is true and $[P]=0$ if $P$ is false,
\item $random(k)$ the function that returns an integer between $0$ and $k-1$ with probability $\frac 1 k$\footnote{to simplify some notations, when $p$,$q$,$r$,$s$ are four integers, we will sometimes note $random(p/q)\geq r/s$ instead of $random(p\,s)\geq q\,r$, $\ldots$},
\end{itemize}
and:
\begin{itemize}
\item $F(m)$ will count the final number of elements of size $m$,
\item $M=\min \argmax_m F(m)$: the first value of $m$ that maximizes $F(m)$,
\item $\underline{B}(m)$ is the number of instances of the initial generator that correspond to a choice $m$,
\item $B$ a slightly modified function of $\underline{B}$ to impose for example $B(M-1)=B(M)$, $\ldots$,
\item $\overline{B}(m)=B(m) \frac {F(M)}{B(M)}$ : the normalized function such that $\overline{B}(M)=F(M)$,
\item ${\underline G}(m)=\frac{F(m)}{\underline{B}(m)}$, $G(m)=\frac{F(m)}{B(m)}$, ${\overline G}(m)=\frac{F(m)}{\overline{B}(m)}$.
\end{itemize}
Finally, we will note:
\begin{itemize}
\item ${\cal R}F(m)=F(m+1)/F(m)$ the ratio between two consecutive terms of $F$ when they are defined,
\item if ${\cal R}F(m)={\text numerator}(m)/{\text denominator}(m)$ is a ratio between two polynomials in $m$, $f(x)={\text numerator}(x)-{\text denominator}(x)$,
\end{itemize}
and we will define ${\cal R}{\underline B}(m)$, $\ldots$, ${\overline g}(x)$ in the same way.

\subsection{An Abstract Algorithm}
With these notations, we will use the following procedure to draw a number $m$ with probability $\frac {F(m)}{\sum_m F(m)}$:
\begin{algorithm}
\caption{Choose $m$ with probability $\frac {F(m)}{\sum_m F(m)}$:}\label{algAbstract}
\begin{algorithmic}[1]
  \Function{accept\_m}{$M$,$m$} \Comment{accept $m$ with probability ${\overline G}(m)=\frac{F(m)}{{\overline B}(m)}$}
  \If {$m<M$}
  \ForAll{$i \in [m,M-1]$} \Comment{accept with probability $\frac{{\cal R}{\overline B}(i)}{{\cal R}F(i)}$}
  \IIf{$random({\cal R}F(i))\geq {\cal R}{\overline B}(i)$} return FALSE \EndIIf
  \EndFor
  \Else
  \ForAll{$i \in [M,m-1]$} \Comment{accept with probability $\frac{{\cal R}F(i)}{{\cal R}{\overline B}(i)}$}
  \IIf{$random({\cal R}{\overline B}(i))\geq {\cal R}F(i)$} return FALSE \EndIIf
  \EndFor
  \EndIf
  \State return TRUE
  \EndFunction
  \item[]
  \State $M \gets \text{first value of $m$ which maximalizes $F(m)$}$
  \While{true}
  \State choose $m$ with probability $\frac{B(m)}{\sum_m{B(m)}}=\frac{\overline{B}(m)}{\sum_m \overline{B}(m)}$
  \IIf{$\textsc{accept\_m}(M,m)$} return $m$\EndIIf
  \EndWhile
\end{algorithmic}
\end{algorithm}

Lines 1 to 12 correspond to the main function that accepts a value $m$ with probability $\frac{F(m)}{\overline{B}(m)}$. It is based on the following relations, when $m<M$:
\begin{eqnarray*}
  \frac{F(m)}{\overline{B}(m)}&=&\frac{F(m)\overline{B}(m+1)}{\overline{B}(m)F(m+1)} \frac{F(m+1)\overline{B}(m+2)}{\overline{B}(m+1)F(m+2)}\ldots \frac{F(M-1)\overline{B}(M)}{\overline{B}(M-1)F(M)} \frac{F(M)}{\overline{B}(M)} \\
  &=& \frac{{\cal R}{\overline B}(m)}{{\cal R}F(m)}\frac{{\cal R}{\overline B}(m+1)}{{\cal R}F(m+1)}\ldots\frac{{\cal R}{\overline B}(M-1)}{{\cal R}F(M-1)}
\end{eqnarray*}
similarly when $m>M$:
\[\frac{F(m)}{\overline{B}(m)}=\frac{{\cal R}F(M)}{{\cal R}{\overline B}(M)}\frac{{\cal R}F(M+1)}{{\cal R}{\overline B}(M+1)}\ldots \frac{{\cal R}F(m-1)}{{\cal R}{\overline B}(m-1)},\]
and finally to $\frac{F(M)}{\overline{B}(M)}=1$.
It therefore assumes that when $i<M$, ${\cal R}F(i)\geq {\cal R}{\overline B}(i)$ and when $i\geq M$, ${\cal R}F(i)\leq {\cal R}{\overline B}(i)$.

Lines 13 to 17 group the main lines of the program: First, we compute the value of $M$ (line 13), then we choose a value of $m$ by repeatedly drawing a value of $m$ with probability $\frac{B(m)}{\sum_m{B(m)}}=\frac{\overline{B}(m)}{\sum_m \overline{B}(m)}$ and accept it with probability $\frac{F(m)}{{\overline B}(m)}$. We accept in this loop a value $m$ with the probability $\frac{\sum_m F(m)}{\sum_m \overline{B}(m)}$, this value must be big enough to obtain an efficient algorithm. In addition, we have more specifically:
\begin{lemma}\label{complexity}
  Let us note $C_1$ the complexity necessary to choose a value $m$ with probability $\frac{\overline{B}(m)}{\sum_m \overline{B}(m)}=\frac{B(m)}{\sum_m B(m)}$, $C_2=\frac{\sum_m |M-m| \overline{B}(m)}{\sum_m \overline{B}(m)}=\frac{\sum_m |M-m| B(m)}{\sum_m B(m)}$ the mean absolute difference of $B$. If we assume that the tests of lines~4 and 8 can be performed with complexity $O(1)$, the average complexity of this algorithm is $O\left((1+C_1+C_2)\,\frac{F(M)}{\sum_m F(m)} \frac{\sum_m B(m)}{B(M)}\right)$.

  Moreover, if the algorithm leads to a value $m$, the average complexity \footnote{This notion of complexity is important because for some algorithms, its maximum value can be very different from the average complexity $\ldots$.} is $O\left(|M-m|+(1+C_1+C_2)\,\frac{F(M)}{\sum_m F(m)} \frac{\sum_m B(m)}{B(M)}\right)$.

  In the particular case where $C_1=O(n)$, $C_2=O(\sqrt n)$, $\frac{\sum_m F(m)}{F(M)}=\Omega(\sqrt n)$, $\frac{\sum_m B(m)}{B(M)}=O(\sqrt n)$, we will obtain an average complexity in $O(n)$ and when the algorithm ends with a value $m$ also $O(n)$.
\end{lemma}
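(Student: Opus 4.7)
My plan is to analyze the main \texttt{while} loop of lines 14--17 as a sequence of i.i.d.\ Bernoulli trials, computing both the success probability per trial and the expected cost of a single trial in closed form, and then combining them by the usual argument for geometric stopping.

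First, I compute the per-iteration acceptance probability. The telescoping identity preceding the lemma shows that, conditional on having drawn a particular $m$ on line~15, \textsc{accept\_m} returns TRUE with probability $F(m)/\overline{B}(m)$. Summing against the proposal distribution and using $\overline{B}(m)=B(m)F(M)/B(M)$ gives
\[
p \;=\; \sum_m \frac{B(m)}{\sum_k B(k)}\cdot\frac{F(m)}{\overline{B}(m)} \;=\; \frac{B(M)}{F(M)}\cdot\frac{\sum_m F(m)}{\sum_m B(m)},
\]
so the number of iterations before acceptance is geometric with mean
$1/p = \bigl(F(M)/\sum_m F(m)\bigr)\bigl(\sum_m B(m)/B(M)\bigr)$.

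Next I bound the expected cost of a single iteration. Drawing $m$ costs $C_1$ by assumption; the subsequent call to \textsc{accept\_m} performs at most $|M-m|$ tests of the form on lines~4 and~8, each of cost $O(1)$ by hypothesis, plus $O(1)$ overhead. Averaging over $m$ drawn from $B$ yields expected per-iteration cost $O(1+C_1+C_2)$, since $C_2$ is by definition the mean of $|M-m|$ under that distribution. Combining this with the geometric number of iterations by linearity of expectation (or equivalently Wald's identity, legal because each trial's cost is independent of whether previous trials succeeded) gives the first bound of the lemma.

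The refined statement conditional on the returned value $m$ is the delicate part. I split the total work as (a) the failed iterations preceding the successful one and (b) the successful iteration itself. Because the trials are i.i.d., conditioning on the value $m$ returned by the last trial does not affect the distribution of the number or cost of the preceding failed trials, so (a) contributes an expectation that is still bounded by the expression from the first part. In (b), no early rejection occurs, so \textsc{accept\_m} runs all $|M-m|$ tests to completion, contributing the extra additive $O(|M-m|)$ term. The specialization in the last sentence is then immediate: under the stated orders one has $1+C_1+C_2=O(n)$ and $\bigl(F(M)/\sum_m F(m)\bigr)\bigl(\sum_m B(m)/B(M)\bigr)=O(1/\sqrt n)\cdot O(\sqrt n)=O(1)$, so the product is $O(n)$, and $|M-m|\leq n$ does not worsen this. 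The main obstacle is really just articulating (a) carefully enough to be convinced that the conditioning on the returned value is innocuous; everything else is mechanical substitution.
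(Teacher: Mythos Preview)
Your argument is correct and follows essentially the same route as the paper's own proof: compute the per-iteration acceptance probability, bound the expected per-iteration cost by $O(1+C_1+C_2)$, multiply by the expected number of iterations $1/p$, and then single out the final successful iteration to obtain the additive $|M-m|$. Your treatment is in fact more careful than the paper's on the conditioning point (the paper simply asserts that the last loop costs $O(C_1+1+|m-M|)$ without discussing whether conditioning on the returned $m$ affects the earlier iterations), but the decomposition and the arithmetic are identical.
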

\begin{proof}{Proof}
  We first study the number of operations passed inside a loop. We need by definition, at line~15, $C_1$ operations and as {\sc accept\_m} is called with a number $m$ (chosen with probability $\frac{B(m)}{\sum_m B(m)}$) at line~16: $O(1)+C_2$ operations. This corresponds to an average complexity of a main loop execution of $O(1)+C_1+C_2$.

  We now know that it accepts a value $m$ with the probability $C_3=\frac{\sum_m F(m)}{\sum_m \overline{B}(m)}$. The average number of loops is therefore
  \[1+(1-C_3)+(1-C_3)^2+\ldots = \frac 1 {1-(1-C_3)}=\frac{\sum_m \overline{B}(m)}{\sum_m F(m)}=\frac{F(M)}{\sum_m F(m)} \frac{\sum_m B(m)}{B(M)}.\]

  We finish by noting that when the algorithm returns a value $m$, we use in the last execution of the loop $O(C_1+1+|m-M|)$ operations.
\end{proof}

\subsection{Some generating functions}

We present three generating functions that we will use in this paper to generate an initial value $m$.
\paragraph{`Basic Generator'}
This is the simplest generator: to generate an integer between $[0,k]$ with probability $\frac 1 {k+1}$, we use $random(k+1)$. So we have $\underline{B}^k(m)=B^k(m)=1$ when $0\leq m\leq k$ (and ${\cal R}\underline{B}^k(m)={\cal R}B^k(m)=1$ when $0\leq m<k$) Its complexity is very low: $O(1)$. We will use it to generate a Motzkin left factor of final height $h$ when $h$ is near $n$.
\paragraph{`Binomial Generator'}
We choose ${\underline B}_M(m)=\combi{2M}m$, $B_M(m)={\underline B}_M(m)$ when $m\neq M$ and $B_M(M)=B_M(M+1)$. These choices give us an efficient algorithm to draw the first value of $m$: draw $2M$ random value $0$ or $1$, let $m$ count the number of $0$, finally, if we find $m=M$, accept this choice with probability $\frac M {M+1}$. We also obtain:
\begin{eqnarray*}
  {\cal R}{\underline B}_M(m) &=& \frac{2M-m}{m+1}\\
  {\cal R}B_M(m) &=& {\cal R}{\underline B}_M(m) \text{ if }m<M-1\text{ or }m>M \\
  {\cal R}B_M(M-1) &=& {\cal R}B_M(M) = 1
\end{eqnarray*}

This generator chooses an integer $m$ between $0$ and $2M$ with a probability close to the binomial distribution. It has an average complexity of $O(M)$. We will need the following lemma to relate $B_M(M)$ to $\sum_m B_M(m)$.
\begin{lemma}\label{lemmaG1}
We have when $M\geq 1$: $\frac{\sum |M-m|B_M(m)}{\sum B_M(m)}=\Theta(\sqrt M)$ and $\frac{\sum B_M(m)}{B_M(M)}= \Theta(\sqrt M)$.
\end{lemma}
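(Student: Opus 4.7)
The plan is to reduce both quantities to closed expressions in $\binom{2M}{M}$ and $2^{2M}$, and then read off the $\sqrt M$ asymptotics from Stirling's approximation $\binom{2M}{M}\sim 4^M/\sqrt{\pi M}$.

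First I would record how $B_M$ differs from the pure binomial coefficients: $B_M(m)=\binom{2M}{m}$ for every $m\neq M$, and $B_M(M)=\binom{2M}{M+1}=\frac{M}{M+1}\binom{2M}{M}$. From this and $\sum_m\binom{2M}{m}=2^{2M}$ I get immediately
\[
\sum_m B_M(m)\;=\;2^{2M}-\binom{2M}{M}+\binom{2M}{M+1}\;=\;2^{2M}-\frac{1}{M+1}\binom{2M}{M}.
\]
Stirling then shows that the subtracted term is $\Theta(4^M/M^{3/2})$, hence negligible against $2^{2M}$; so $\sum_m B_M(m)=\Theta(4^M)$ and $B_M(M)=\Theta(4^M/\sqrt M)$, giving $\sum_m B_M(m)/B_M(M)=\Theta(\sqrt M)$.

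For the mean absolute deviation I would use the crucial observation that the modification happens exactly at $m=M$, where the weight $|M-m|$ vanishes; therefore
\[
\sum_m |M-m|\,B_M(m)\;=\;\sum_m |M-m|\binom{2M}{m}.
\]
I would then invoke (or prove in an appendix) the classical identity
\[
\sum_{m=0}^{2M}|M-m|\binom{2M}{m}\;=\;M\binom{2M}{M},
\]
valid for every $M\ge 1$. Combined with the previous estimate on the denominator, the first ratio is $M\binom{2M}{M}/(2^{2M}-\binom{2M}{M}/(M+1))$, and Stirling again yields $\Theta(\sqrt M)$.

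The only non-routine step is the identity $\sum_m|M-m|\binom{2M}{m}=M\binom{2M}{M}$; I expect that to be the main obstacle. If I cannot quote it, I would prove it by symmetry $m\leftrightarrow 2M-m$ to reduce to $2\sum_{m=0}^{M-1}(M-m)\binom{2M}{m}$, then use $k\binom{2M}{k}=2M\binom{2M-1}{k-1}$ together with $\sum_{k=0}^{M-1}\binom{2M-1}{k}=2^{2M-2}$ (from the symmetry of row $2M-1$) and the telescoping rearrangement $\sum_{m=0}^{M-1}(M-m)\binom{2M}{m}=\sum_{j=0}^{M-1}\sum_{m=0}^{j}\binom{2M}{m}$ to collapse the double sum to $\tfrac{M}{2}\binom{2M}{M}$. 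Once this combinatorial lemma is in hand, both statements of the lemma follow by a one-line Stirling estimate.
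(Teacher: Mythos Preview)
Your argument is correct. Both you and the paper exploit the same two structural facts---that $B_M$ coincides with the binomial row $\binom{2M}{\cdot}$ except at $m=M$, and that the central binomial coefficient is $\Theta(4^M/\sqrt M)$ by Stirling---so the high-level strategy is the same. The difference is in how the mean absolute deviation is handled: the paper simply invokes the fact that $\underline B_M$ is (up to the factor $2^{2M}$) the symmetric binomial distribution and that its mean absolute deviation is known to be $\Theta(\sqrt M)$, then transfers this to $B_M$ by noting $B_M(M)=\frac{M}{M+1}\underline B_M(M)$. You instead compute the exact closed form $\sum_m|M-m|\binom{2M}{m}=M\binom{2M}{M}$, and you also make explicit the pleasant observation that the single modified term contributes nothing to the numerator because its weight $|M-M|$ is zero. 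Your route is more self-contained (no appeal to an unproved $\Theta(\sqrt M)$ fact about the binomial MAD) at the cost of proving the combinatorial identity; the paper's route is shorter but assumes that fact as known. Either is perfectly adequate here, and your sketch of the identity via symmetry, $k\binom{2M}{k}=2M\binom{2M-1}{k-1}$, and $\sum_{k=0}^{M-1}\binom{2M-1}{k}=2^{2M-2}$ does go through.
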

\begin{proof}{Proof}
  ${\underline B}_M$ is the classical binomial distribution with $p=q=\frac 1 2$ multiplied by $2^{2M}$, so we have: $\frac{\sum |M-m|{\underline B}_M(m)}{\sum {\underline B}_M(m)}=\Theta(\sqrt M)$ and using the Stirling Formula $\frac{\sum {\underline B_M}(m)}{{\underline B}_M(M)}= \Theta(\sqrt M)$.

  We can conclude by noting that when $m\neq M$, ${\underline B}_M(m)=B_M(m)$ and $B_M(M)=\frac m {m+1} {\underline B}_M(M)$ (and thus $B_M(M)=\Theta({\underline B}_M(M))$ and $\sum B_M(m)=\Theta(\sum {\underline B_M}(m))$).
\end{proof}

This algorithm can be coded as:
\begin{algorithmic}
  \Function{bin}{$M$} \Comment{choose $m$ with probability $\frac{B_M(m)}{\sum_m{B_M(m)}}$}
  \While{true}
  \State $m \gets 0$
  \ForAll{$i \in [1,2M]$}
  \IIf{$random(k+1)=0$} $m \gets m+1$ 
  \EndFor
  \IIf{$m=M$ AND $random(M+1)=0$} continue \EndIIf
  \State return $m$
  \EndWhile
  \EndFunction
\end{algorithmic}
We will use the `Binomial Generator' when possible but note that it has the limitation of only being able to draw a number between $0$ and $2M$(which sometimes is not enough).

\paragraph{`Extended Binomial Generator'}
The previous generator corresponds to an equiprobable binomial distribution: $p=q=\frac 1 2$, we will use here an equivalent of the generic distribution with $p=\frac 1 {k+1}$ and $q=1-p$; let's define ${\underline B}_M(m)$ as:
\[{\underline B}_M^{k,\alpha}(m)=k^{(k+1)M+\alpha-m}\combi{(k+1)M+\alpha}m,\]
we get ${\cal R}{\underline B}_M^{k,\alpha}(m)=\frac{(k+1)\,M+\alpha-m}{k(m+1)}$. If $0\leq \alpha<k$, this function reaches its maximum when $m=M$, it can draw numbers between $0$ and $(k+1)M+\alpha$ and it is simple to implement: just draw $(k+1)M+\alpha$ random integers between $0$ and $k$, and count the number of $0$.

We can now define $B(m)$ as:
\begin{itemize}
\item $B_M^{k,\alpha}(m)={\underline B}_M^{k,\alpha}(m)$ when $m\neq M$,
\item $B_M^{k,\alpha}(M)=max({\underline B}_M^{k,\alpha}(M-1),{\underline B}_M^{k,\alpha}(M+1))$.
\end{itemize}
This gives us:
\begin{itemize}
  \item when $m<M-1$ or $m\geq M+1$, ${\cal R}B^{k,\alpha}_M(m) = {\cal R}{\underline B}^{k,\alpha}_M(m)$,
  \item when $B^{k,\alpha}_M(M)=B^{k,\alpha}_M(M+1)$
    \[{\cal R}B^{k,\alpha}_M(M-1) = \frac{(k M+\alpha+1)(k M+\alpha)}{k^2M(M+1)}, {\cal R}B^{k,\alpha}_M(M)=1,\]
  \item and when $B^{k,\alpha}_M(M)=B^{k,\alpha}_M(M-1)$
    \[{\cal R}B^{k,\alpha}_M(M-1) = 1, {\cal R}B^{k,\alpha}_M(M)=\frac{(k M+\alpha+1)(k M+\alpha)}{k^2M(M+1)}.\]
\end{itemize}
We will prove the following lemma in the Appendix~\ref{proofG2}:
\begin{lemma}\label{lemmaG2}
  When $0\leq \alpha \leq k-1$, $k\geq 1$ and $M\geq 1$, we have:
  \begin{itemize}
  \item ${\underline B}_M^{k,\alpha}(m)\geq {\underline B}_M^{k,\alpha}(m+1)$ iff $m\geq M$,
  \item $\frac{\sum_m B_M^{k,\alpha}(m)}{B_M^{k,\alpha}(M)}=\Theta(\sqrt M)$,
  \item $\frac{\sum_m |m-M|B_M^{k,\alpha}(m)}{\sum_m B_M^{k,\alpha}(m)}=O(\sqrt M)$.
  \end{itemize}
\end{lemma}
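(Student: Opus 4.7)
The plan is to read $\underline{B}_M^{k,\alpha}$ as a scaled binomial probability mass function and reduce the three bullets to the binomial theorem plus Stirling's formula. Throughout, set $N=(k+1)M+\alpha$ and observe that the binomial theorem gives
\[\sum_{m=0}^{N}\underline{B}_M^{k,\alpha}(m)=\sum_{m=0}^{N}\binom{N}{m}k^{N-m}=(k+1)^{N},\]
so $\mu(m)=\underline{B}_M^{k,\alpha}(m)/(k+1)^{N}$ is exactly the probability mass function of the binomial distribution with parameters $N$ and $p=1/(k+1)$. All three bullets will then follow from standard facts about this distribution.

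For the first bullet, I use the given ratio ${\cal R}\underline{B}_M^{k,\alpha}(m)=\frac{(k+1)M+\alpha-m}{k(m+1)}$: the condition $\underline{B}_M^{k,\alpha}(m)\geq\underline{B}_M^{k,\alpha}(m+1)$, i.e.\ ${\cal R}\underline{B}_M^{k,\alpha}(m)\leq 1$, rearranges to $m\geq M-\frac{k-\alpha}{k+1}$. Since $0\leq\alpha\leq k-1$ ensures $\frac{k-\alpha}{k+1}\in(0,1)$, for integer $m$ this is equivalent to $m\geq M$.

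For the second bullet, I apply Stirling to $\binom{N}{M}$ with $N\sim(k+1)M$ and $N-M\sim kM$, which (equivalently, the local central limit theorem applied to $\mu$ at its mode, since $M$ lies within $1$ of the mean $N/(k+1)$) gives
\[\underline{B}_M^{k,\alpha}(M)=k^{kM+\alpha}\binom{N}{M}=\Theta\!\left(\frac{(k+1)^{N}}{\sqrt{M}}\right).\]
Dividing the total $(k+1)^N$ by this quantity yields $\Theta(\sqrt M)$. For the third bullet, the binomial mean is $N/(k+1)=M+\alpha/(k+1)$, within $1$ of $M$, and the variance is $Nk/(k+1)^{2}=\Theta(M)$; hence by Cauchy--Schwarz the mean absolute deviation of $\mu$ from $M$ is $O(\sqrt M)$, i.e.\ $\sum_{m}|m-M|\underline{B}_M^{k,\alpha}(m)=O\!\bigl(\sqrt M\cdot(k+1)^{N}\bigr)$, and dividing by $\sum_{m}\underline{B}_M^{k,\alpha}(m)=(k+1)^{N}$ gives the required $O(\sqrt M)$.

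Finally, to transfer the estimates from $\underline{B}$ to $B$, I note that ${\cal R}\underline{B}_M^{k,\alpha}(M-1)=\frac{kM+\alpha+1}{kM}$ and $1/{\cal R}\underline{B}_M^{k,\alpha}(M)=\frac{k(M+1)}{kM+\alpha}$ are both $1+O(1/M)=\Theta(1)$, so $\underline{B}_M^{k,\alpha}(M\pm 1)=\Theta(\underline{B}_M^{k,\alpha}(M))$ and hence $B_M^{k,\alpha}(M)=\Theta(\underline{B}_M^{k,\alpha}(M))$; since $B$ and $\underline B$ differ only at $m=M$, the second and third bullets carry over verbatim. The one place needing care is the Stirling bookkeeping for $k^{kM+\alpha}\binom{N}{M}$, where the dominant exponential factors must cancel cleanly against $(k+1)^{N}$; but this is just the standard statement that a binomial's modal mass is $\Theta(1/\sqrt{\text{variance}})$, so no genuine obstacle arises.
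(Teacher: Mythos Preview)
Your proof is correct and follows essentially the same approach as the paper's: identify $\underline{B}_M^{k,\alpha}$ as a rescaled $\mathrm{Bin}(N,1/(k+1))$ mass function, use the ratio for the first bullet, Stirling for the modal mass in the second, the variance bound (your Cauchy--Schwarz is the paper's Jensen) for the third, and then transfer to $B$ via $B_M^{k,\alpha}(M)=(1-O(1/M))\,\underline{B}_M^{k,\alpha}(M)$. The only organisational difference is that the paper handles the Stirling step by first treating the clean case $\alpha=0$ explicitly and then bounding the ratio $l(\alpha)/l(0)\in(1/2,1]$, whereas you absorb general $\alpha$ directly into the Stirling/local-CLT estimate; both routes yield the same uniform $\Theta(1/\sqrt{M})$ for the modal mass.
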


When $0\leq \alpha\leq k-1$, this algorithm can be coded as:
\begin{algorithmic}
  \Function{extended\_bin}{$M,k,\alpha,which$}
  \State {\bf Require:} $M > 0$, $k$, $0 \leq \alpha \leq k-1$
  \While{true}
  \State $m \gets 0$
  \ForAll{$i \in [1,(k+1)M+\alpha]$}
  \IIf{$random(k+1)=0$} $m \gets m+1$ \EndIIf
  \EndFor
  \IIf{$m\neq M$} return $m$ \EndIIf
  \If{$which=\text{'M+1'}\textsc{ and }random(k(M+1))\geq k\,M+\alpha$}
  \State continue \Comment{reject $m$}
  \ElsIf{$which=\text{'M-1'}\textsc{ and }random(k\,M+\alpha+1)\geq k\,M$}
  \State continue \Comment{reject $m$}
  \EndIf

  \State return $m$
  \EndWhile
  \EndFunction
\end{algorithmic}
It generates an integer $m$ with probability $\frac{B_M^{k,\alpha}(m)}{\sum_{i=0}^{(k+1)M+\alpha}B_M^{k,\alpha}(i)}$ and its average complexity is $O(k(M+1)+\alpha)$.

\section{Generation of a `Fibonacci word'}\label{fibonacci}
Fibonacci numbers have a very long history: they correspond to the sequence A000045 in The On-Line Encyclopedia of Integer Sequences\cite{oeis} and are defined by the equation $F_n=F_{n-1}+F_{n-2}$ with $F_0=F_1=1$, they are related to the golden number $\varphi=\frac{1+\sqrt 5} 2$ by the well known relation $F_n=\frac{\varphi^{n+1}-(1-\varphi)^{-(n+1)}} {\sqrt 5}=\left\lceil \frac {\varphi^{n+1}} {\sqrt 5}\right\rfloor$ (using the nearest rounding function).

Let us note first that we can obtain a simple algorithm by recursively choosing a letter $a$ with probability $\frac{F_{n-1}}{F_n}\approx \varphi$ and a letter $b$ with probability $\frac{F_{n-2}}{F_n}\approx \varphi^2$, but this algorithm is not very efficient. So we prefer to use a rejection algorithm: we simply build a sequence of letters by choosing $a$ with probability $\alpha$ and $b$ with probability $\alpha^2$ with $\alpha+\alpha^2=1$ ($\alpha=\frac{-1+\sqrt 5} 2$) and we stop when we have obtained a word of ${\cal F}_n$ or a word of ${\cal F}_{n+1}$ (a failure). In this last case, we start a new generation from the beginning. We can check that in one try, we generate each word of ${\cal F}_n$ with a probability $\alpha^n$ and each word of ${\cal F}_{n-1}$ followed by a letter $b$ with a probability $\alpha^{n+1}$ ; the probability of failure is therefore $\frac{\alpha^{n-1}F_{n-1}\alpha^2}{\alpha^n F_n+\alpha^{n+1}F_{n-1}} \approx \frac \alpha {\varphi+\alpha} = 1-\frac {\sqrt 5+1}{2\sqrt 5}$. This proves that the average number of trials is $O\left(\frac{2\sqrt 5}{\sqrt 5+1}\right)=O(1)$ and that the average complexity is $O(n)$.

However, the main step of this algorithm is to generate a real $x$ in the interval $[0,1]$ and to compare it to $\alpha$. Thus, even if we start generating the most significant digits of $x$, we need to check if we have enough digits to be sure that $x<\alpha$ or $x\geq \alpha$. If this is not the case, we need to draw the following digits...

We present here an algorithm that draws a random word containing the letters $a$ and $b$ such that $n=|a|+2|b|$ but using only small integers (less than $n^2$). We have $F_n(m)=\combi{n-m}m$ where $m$ counts the number of letters $b$, which gives us:
${\cal R}F(m)=\frac{(n-2m)(n-1-2m)}{(m+1)(n-m)}$, $f(x)=(n-2x)(n-1-2x)-(x+1)(n-x)$, and the following lemma which groups together some necessary properties of $F_n(m)$ and is proved in the Appendix~\ref{proofFib}:
\begin{lemma}\label{lemma1}
  When $n>0$, 
  \begin{itemize}
  \item there exists a unique value $\tilde m$ in $[0 \ldots \lfloor n/2 \rfloor]$ such that $f(x)>0$ with $x\in [0,\lfloor n/2\rfloor]$ if and only if $x<\tilde m$,
  \item $\tilde m=\frac {5n-3-\sqrt{5n^2+10n+9}}{10}$, $M=\lceil \tilde m \rceil \approx \frac{5-\sqrt 5}{10}n$,
  \item $\frac{F_n}{F_n(M)}=\Omega(\sqrt n)$.
  \end{itemize}
\end{lemma}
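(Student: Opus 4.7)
The plan is to expand $f$ into an explicit quadratic in $x$, identify $\tilde m$ with its smaller real root, verify this root lies in $[0,\lfloor n/2\rfloor]$, and then estimate $F_n(M)$ by Stirling to obtain the ratio bound.

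First I expand $f(x)=(n-2x)(n-1-2x)-(x+1)(n-x)$ to get
\[ f(x) = 5x^2 - (5n-3)x + n(n-2). \]
Since the leading coefficient is positive, $f$ has two real roots (the discriminant $(5n-3)^2-20n(n-2)=5n^2+10n+9$ is positive for all $n>0$) and is positive exactly outside the interval bounded by them. The quadratic formula yields the two roots $\frac{(5n-3)\pm\sqrt{5n^2+10n+9}}{10}$, and I take $\tilde m$ to be the smaller one, which matches the claimed closed form.

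Next I verify that among $x\in[0,\lfloor n/2\rfloor]$ the sign change of $f$ occurs at this $\tilde m$ and only there. Evaluating $f$ at the endpoints gives $f(0)=n(n-2)\geq 0$ for $n\geq 2$ (the cases $n=1,2$ are handled directly) and $f(n/2) = -n(n+2)/4<0$. Hence $n/2$ lies strictly between the two roots, so the larger root exceeds $\lfloor n/2\rfloor$, while the smaller root $\tilde m$ belongs to $[0,\lfloor n/2\rfloor]$ and is the unique value in that interval where $f$ changes from positive to negative; this proves the first bullet. The asymptotic $M=\lceil\tilde m\rceil\approx\frac{5-\sqrt 5}{10}n$ then follows by pulling $n$ out of the square root: $\sqrt{5n^2+10n+9}=n\sqrt 5\,(1+O(1/n))$, so $\tilde m/n\to\frac{5-\sqrt 5}{10}$, giving the second bullet.

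For the third bullet I apply Stirling to $F_n(M)=\binom{n-M}{M}$. Writing $c=M/n\to\frac{5-\sqrt 5}{10}$, both $n-M=\Theta(n)$ and $n-2M=\Theta(n)$, so Stirling yields
\[ F_n(M)=\Theta\!\left(\frac{1}{\sqrt n}\right)\left(\frac{(1-c)^{1-c}}{c^{c}(1-2c)^{1-2c}}\right)^{n}. \]
The denominator $\sum_m F_n(m)=\sum_{m}\binom{n-m}{m}$ is a classical Fibonacci number and equals $\Theta(\varphi^n)$. A short algebraic identification, using $1-2c=1/\sqrt 5$ and $c/(1-c)=\varphi^{-2}$, shows that the exponential base in the Stirling estimate equals exactly $\varphi$, hence $F_n(M)=\Theta(\varphi^n/\sqrt n)$. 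Dividing gives $F_n/F_n(M)=\Theta(\sqrt n)=\Omega(\sqrt n)$.

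The main obstacle is the final algebraic collapse of the Stirling entropy factor to $\varphi$: all other steps are routine manipulations with a single quadratic, but here one must recognize the golden-ratio identities hidden in $c=\frac{5-\sqrt 5}{10}$ to convert the Stirling expression into the expected $\varphi^n/\sqrt n$ asymptotic cleanly.
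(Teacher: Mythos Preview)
Your treatment of the first two bullets is essentially the paper's: expand $f$ as a quadratic, locate its roots, and check the sign at the endpoints of $[0,\lfloor n/2\rfloor]$ (the paper uses the derivative $f'(x)=-5n+10x+3$ together with the endpoint values, but this is the same computation).

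For the third bullet you take a genuinely different route. The paper never appeals to Stirling or to the closed-form asymptotic $F_n=\Theta(\varphi^n)$; instead it bounds the ratio ${\cal R}F_n(\tilde m+\epsilon)\geq 1-\frac{5\epsilon}{\tilde m}$ and deduces that $F_n(M+i)\geq\frac12 F_n(M)$ for all $0\leq i\leq\sqrt{(\tilde m-1)/5}$, so the sum picks up $\Omega(\sqrt{\tilde m})=\Omega(\sqrt n)$ copies of a term comparable to the maximum. Your Stirling argument is also valid: with $c=(5-\sqrt5)/10$ one has $1-2c=1/\sqrt5$, $1-c=c\varphi^2$ and $c/(1-2c)=\varphi^{-1}$, from which the entropy base collapses to $\varphi^{2c-1}\varphi^{2(1-c)}=\varphi$, matching the Fibonacci growth and yielding $F_n/F_n(M)=\Theta(\sqrt n)$. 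The trade-off is that your approach imports two external facts (Stirling and the Fibonacci asymptotic) and hinges on a golden-ratio identity specific to this problem, whereas the paper's local-ratio argument is self-contained and is reused verbatim for the Schr\"oder and Motzkin cases, where no convenient closed form for $\sum_m F_n(m)$ is invoked. On the other hand, your method gives the two-sided $\Theta(\sqrt n)$ rather than merely $\Omega(\sqrt n)$.
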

The existence of $\tilde m$ gives us a linear method for computing $M$(Algorithm~\ref{algAbstract}, line 13): we can either make $m$ go from 0 to $\lfloor n/2\rfloor$ and stop as soon as ${\cal R}F_n(m)\leq 1$ (i.e. $f(m)\leq 0$) or we can compute $\lceil \tilde m \rceil$ directly.

We choose the `Binomial Generator' to obtain the first value of $m$(Algorithm~\ref{algAbstract}, line 15). We thus obtain:
\begin{eqnarray*}
  {\cal R}{\overline B}_M(m) &=& {\cal R} B_M(m) \\
  {\cal R}{\underline G}_M(m) &=& \frac{(n-2m)(n-1-2n)}{(2M-m)(n-m)} \\
  {\cal R}{\overline G}_M(m) &=& \frac{(n-2m)(n-1-2n)}{(2M-m+[m=M]-[m=M-1])(n-m)}.
\end{eqnarray*}

We now need the following lemma (which is proved in the Appendix~\ref{proofFib}):
\begin{lemma}\label{lemma2}
  We have:
  \begin{itemize}
  \item ${\cal R}{\overline B}_M(m)< {\cal R}F_n(m)$ if and only if $m < M$,
  \end{itemize}
\end{lemma}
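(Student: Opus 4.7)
{Proof Plan}
The plan is to reduce the inequality to the sign of a single quadratic in $m$, dispatch the two exceptional indices $m \in \{M-1, M\}$ via Lemma~\ref{lemma1}, and handle the remaining integer values of $m$ by elementary analysis of that quadratic's roots.

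For the reduction, when $m \notin \{M-1, M\}$ the definition of $B_M$ gives ${\cal R}{\overline B}_M(m) = (2M-m)/(m+1)$, so multiplying the desired inequality by the positive factor $(m+1)(n-m)$ turns it into $g(m) > 0$, where
\[
g(m) \;:=\; (n-2m)(n-1-2m) - (2M-m)(n-m) \;=\; 3m^{2} + (2M+2-3n)\,m + n(n-1-2M).
\]
The alternative form $g(m) = f(m) - (2M - 2m - 1)(n - m)$ is handy because it ties $g$ back to the polynomial $f$ whose sign structure is already given by Lemma~\ref{lemma1}. For the two exceptional indices, ${\cal R}{\overline B}_M(m) = 1$ by construction, so the claim specialises to $f(M-1) > 0$ and $f(M) \le 0$; both are immediate from Lemma~\ref{lemma1}, since $M = \lceil \tilde m \rceil$ forces $M - 1 < \tilde m \le M$.

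It remains to show $g(m) > 0$ on $[0, M-2]$ and $g(m) \le 0$ on $[M+1, \lfloor n/2 \rfloor]$. Because $g$ has positive leading coefficient, the set $\{g \le 0\}$ is a single (possibly empty) closed interval $[r_1, r_2]$, so it suffices to verify the three boundary estimates
\[
g(M - 2) > 0, \qquad g(M + 1) \le 0, \qquad g(\lfloor n/2 \rfloor) \le 0.
\]
Together they force $r_1 \in (M-2, M+1]$ and $r_2 \ge \lfloor n/2 \rfloor$, so $[0, M-2]$ lies strictly to the left of $r_1$ while $[M+1, \lfloor n/2 \rfloor]$ sits inside $[r_1, r_2]$, giving the two sign conclusions on the full integer ranges.

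The main obstacle will be these three boundary computations, because $M$ is defined implicitly as $\lceil \tilde m \rceil$ with $\tilde m$ an explicit irrational from Lemma~\ref{lemma1}. The tool I plan to lean on is the identity $g(m) = f(m) - (2M - 2m - 1)(n - m)$: it gives $g(M-2) = f(M-2) - 3(n - M + 2)$ and $g(M+1) = f(M+1) + 3(n - M - 1)$, so the values $f(M-2)$ and $f(M+1)$ can be bounded via the known signs $f(M-1) > 0$, $f(M) \le 0$ together with the monotonicity of $f$ on $[0, (5n-3)/10]$. The third estimate $g(\lfloor n/2 \rfloor) \le 0$ reduces, after a short calculation, to $M \ge n/4$ (for $n$ even) or $M \ge (n-1)/4$ (for $n$ odd), both of which follow from $\tilde m \approx (5-\sqrt 5)n/10$. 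A handful of small values of $n$, where $[0, M-2]$ is empty or a boundary equality occurs, can then be checked by direct calculation.
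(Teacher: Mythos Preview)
Your overall architecture matches the paper's proof: handle $m\in\{M-1,M\}$ via ${\cal R}{\overline B}_M=1$ and Lemma~\ref{lemma1}, reduce the remaining cases to the sign of the quadratic $g(m)=(n-2m)(n-1-2m)-(2M-m)(n-m)$, observe convexity, and pin down the relevant root by checking $g(M-2)>0$ and $g(M+1)\le 0$ (the paper also checks the right endpoint, as you do). The identity $g(m)=f(m)-(2M-2m-1)(n-m)$ is correct and your treatment of $g(M-2)$ and of $g(\lfloor n/2\rfloor)$ would go through.

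The gap is in your plan for $g(M+1)$. Using only the sign $f(M)\le 0$ and the finite difference of $f$ yields $g(M+1)=f(M)+7M-2n+5\le 7M-2n+5$, and this bound is \emph{not} eventually nonpositive in any uniform way: it requires $M\le(2n-5)/7$, which fails whenever $\lceil\tilde m\rceil$ overshoots $(2n-5)/7$. Concretely it fails at $n=21$ ($M=6$, $7M-2n+5=10$), at $n=50$ ($M=14$, $7M-2n+5=3$), and at $n=100$ ($M=28$, $7M-2n+5=1$); one only gets $7M-2n+5\le 0$ for all $n$ once $7\tilde m\le 2n-12$, i.e.\ $n\gtrsim 128$. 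So ``a handful of small values'' becomes a table of roughly a hundred entries.

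The paper avoids this by feeding in the \emph{exact} location $\tilde m$ from Lemma~\ref{lemma1} rather than just the sign $f(M)\le 0$: it views $g(M+1)=5M^2-(5n-10)M+n^2-4n+5$ as a function of the real parameter $M$, notes it is decreasing for the relevant range, and replaces $M$ by $\tilde m$ to obtain $g(M+1)\le 7\tilde m-2n+5=\tfrac{29}{10}+\tfrac{3n}{2}-\tfrac{7}{10}\sqrt{5n^2+10n+9}$, which is $\le 0$ for all $n\ge 20$. Only $n\le 19$ then needs to be tabulated. In your framework the same sharpening is available: bound $f(M+1)\le f(\tilde m+1)=10\tilde m-5n+8$ (monotonicity of $f$ between $\tilde m+1$ and $M+1$) and $3(n-M-1)\le 3(n-\tilde m-1)$, which again gives $g(M+1)\le 7\tilde m-2n+5$. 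The point is that the integer $M$ is too coarse here; you must use $\tilde m$ itself.
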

to show that we accept a value of $m$ correctly with a probability of ${\cal R}{\overline G}_M(m)=\frac{{\cal R}F_n(m)}{{\cal R}{\overline B}_M(m)}$ and complete lines 4 and 8 of the Algorithm~\ref{algAbstract}.

Now as a consequence of Lemma~\ref{complexity}, $C_1=O(n)$, $C2=O(\sqrt n)$, \ldots, we obtain:
\begin{proposition}\label{proposition1} 
  This algorithm chooses $m$ with probability $\frac{F_n(m)}{F_n}$ and has an average complexity in $O(n)$ using only numbers with $2log_2(n)$ bits.
\end{proposition}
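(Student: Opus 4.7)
The proof is largely a matter of assembling the abstract framework: Algorithm~\ref{algAbstract}, Lemma~\ref{complexity}, together with Lemmas~\ref{lemma1}, \ref{lemma2} and \ref{lemmaG1}. The plan is (i) to check that the hypotheses of the abstract algorithm hold so that correctness transfers, (ii) to substitute concrete bounds into Lemma~\ref{complexity} to obtain an $O(n)$ average complexity, and (iii) to inspect each arithmetic operation to confirm that all integers fit in $2\log_2 n$ bits.

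For correctness, I would verify the monotonicity precondition required by Algorithm~\ref{algAbstract}: ${\cal R}{\overline B}_M(i)\leq {\cal R}F_n(i)$ for $i<M$ and the reverse inequality for $i\geq M$. Away from the boundary this is exactly Lemma~\ref{lemma2}, and at $i=M-1,M$ it holds by construction, because the modification built into $B_M$ forces ${\cal R}B_M(M-1)={\cal R}B_M(M)=1$ while the very definition of $M$ as the first argmax of $F_n$ yields ${\cal R}F_n(M-1)\geq 1\geq {\cal R}F_n(M)$. The telescoping identity displayed after Algorithm~\ref{algAbstract} then shows that \textsc{accept\_m}$(M,m)$ returns TRUE with probability exactly $F_n(m)/{\overline B}_M(m)$, so any given $m$ is returned with probability proportional to $F_n(m)$, i.e.\ $F_n(m)/F_n$ as claimed.

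For the complexity, I would plug the concrete estimates into Lemma~\ref{complexity}: the Binomial Generator gives $C_1=O(M)=O(n)$; Lemma~\ref{lemmaG1} gives $C_2=\Theta(\sqrt M)=O(\sqrt n)$ and $\sum_m B_M(m)/B_M(M)=\Theta(\sqrt n)$; Lemma~\ref{lemma1} gives $F_n(M)/F_n=O(1/\sqrt n)$. Substitution yields
\[
(1+C_1+C_2)\cdot \frac{F_n(M)}{F_n}\cdot \frac{\sum_m B_M(m)}{B_M(M)} \;=\; O\bigl((1+n+\sqrt n)\cdot n^{-1/2}\cdot n^{1/2}\bigr) \;=\; O(n).
\]
For the bit size, I would inspect each argument of \emph{random}: computing $M$ only needs $f(m)$, which is $O(n^2)$; the Binomial Generator uses $random(2)$ and $random(M+1)$; and the tests on lines~4 and 8, once unfolded via the $random(p/q)\geq r/s$ shorthand and after cancelling the common factor $m+1$ between ${\cal R}F_n(m)$ and ${\cal R}B_M(m)$, reduce to comparing two products of two factors each at most $n$, hence bounded by $n^2$.

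The step I expect to need the most care is the bit-size bookkeeping: a naive cross-multiplication of ${\cal R}F_n(i)$ and ${\cal R}{\overline B}_M(i)$ would push some $random$ arguments to size $O(n^3)$, so one must explicitly exploit the shared denominator $m+1$ (and treat the two boundary comparisons at $i=M-1$ and $i=M$ separately, where ${\cal R}{\overline B}_M=1$ and the test degenerates) in order to stay within $n^2$. Everything else is a routine assembly of lemmas already established in the preceding sections.
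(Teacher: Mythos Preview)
Your proposal is correct and follows the same route as the paper, which proves Proposition~\ref{proposition1} in a single line by invoking Lemma~\ref{complexity} with $C_1=O(n)$, $C_2=O(\sqrt n)$, $\sum_m F_n(m)/F_n(M)=\Omega(\sqrt n)$ and $\sum_m B_M(m)/B_M(M)=O(\sqrt n)$; your write-up simply spells out the details the paper leaves implicit (notably the bit-size audit, which in the paper is delegated to the explicit code of Algorithm~\ref{algFibo}). One small remark: Lemma~\ref{lemma2} as stated already covers \emph{all} $m$, including $m=M-1$ and $m=M$, so your separate boundary check is redundant rather than a gap to be filled.
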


{\bf Remark:} We use numbers with $2log_2(n)$ bits:
\begin{itemize}
\item $n$ times to compute the value $M$ that maximizes $F_n^m$. But we can also compute a rough approximation $m'=\left\lceil \frac{5-\sqrt 5}{10} n\right\rceil$ and then search around $m'$ to find the value $M$ using $O(1)$ times such large numbers,
\item $|m-M|$ times to check the acceptance of the value $m$ with a probability of $F_n^m/{\overline B}_M(m)$. So, we can expect $|m-M|$ to be $O(\sqrt M)=O(\sqrt n)$.
\end{itemize}
This suggests that using a library optimized to compute with numbers of $2\log_2(n)$ bits or not will not change the average complexity of this algorithm much.

The final algorithm code can be written as in Algorithm~\ref{algFibo}.
\begin{algorithm}
\caption{Draw a random Fibonacci word with $|a|+2|b|=n$:}\label{algFibo}
\begin{algorithmic}
  \Function{accept\_m}{$n$,$M$,$m$} \Comment{accept $m$ with probability ${\overline G}(m)=\frac{F_n(m)}{{\overline B}_M(m)}$}
  \If {$m<M$}
  \ForAll{$i \in [m,M-1]$}
  \State $v \gets random((n-2i)(n-2i-1))$
  \IIf{$v\geq (n-i)(2M-i-[i=M-1])$} return FALSE\EndIIf
  \EndFor
  \Else
  \ForAll{$i \in [M,m-1]$}
  \State $v \gets random((n-i)(2M-i+[i=M]))$
  \IIf{$v\geq (n-2i)(n-2i-1)$} return FALSE\EndIIf
  \EndFor
  \EndIf
  \State return TRUE
  \EndFunction
  \item[]
  \Require $n > 0$
  \State $M \gets \text{first value of $m$ which maximalizes $F_n(m)$}$
  \While{true} \Comment{Choose $m$}
  \State $m \gets \textsc{bin}(M)$
  \IIf{$\textsc{accept\_m}(n,M,m)$} break\EndIIf
  \EndWhile
  \State \Comment{draw a random word with $n-2m$ letters $a$ and $m$ letters $b$.}
  \State $b \gets m$, $a \gets n-2m$
  \While{$a+b>0$}
  \If{$random(a+b)<a$}
  \State add a letter $a$, $a \gets a-1$,
  \Else
  \State add a letter $b$, $b \gets b-1$.
  \EndIf
  \EndWhile
\end{algorithmic}
\end{algorithm}

\section{Schröder numbers}\label{schorder}
The Schröder numbers are defined by the recurrence $S_0=1$, $S_1=2$, $S_n=3 S_{n-1}+\sum_{k=1}^{n-2} S_k S_{n-k-1}$. $S_n\approx \frac{\sqrt 2 +1}{2^{3/4}\sqrt{\pi}}\frac{(3+2\sqrt 2)^n}{n\sqrt n}$(see \cite{Au2019SomePA}) is also the number of paths $x(1,1)$, $\bar x(1,-1)$, $z(0,2)$ that go from $(0,0)$ to $(2n,0)$ and do not go below the x axis.

Penaud et al.\cite{PENAUD2001345} propose to use a rejection method. This method is more complicated than for the Fibonacci number because it is based on correspondences between the Schröder path of size $2n$ and a subset of prefixes of Schröder paths of size $2n+1$.  The prefixes of Schröder paths of size $2n+1$ are generated letter by letter: $x$ is chosen first, then $x$ is chosen with probability $\sqrt 2-1$, $\bar x$ with probability $\sqrt 2-1$, $z$ with probability $(\sqrt 2-1)^2=3-2\sqrt 2$, $\ldots$ and the path is rejected if it passes under the y-axis; then the matches are used to reject this path or convert it to a Schröder path. In summary, this method is simple but requires some time to compute the floats with high accuracy.

We have:
\[F_n = S_n = \sum_{m=0}^n \frac 1 {n+m+1} \combi{n+m+1}{m,m+1,n-m}=\sum_{m=0}^n F_n(m)\]
this formula corresponds for some $m$ to draw a random word with $m+1$ letters $x$, $m$ letters $\bar x$, $n-m$ letters $z$, then use the ``Cycle Lemma''\cite{DERSHOWITZ199035} which gives a $n+m+1$-1 correspondence between such words and the Schröder path of size $2n$ which contains $m$ letters $x$.

We can now use the `Binomial generator' as for the Fibonacci algorithm to draw a value $m$ with probability $\frac {F_n(m)} {F_n}$. When $m$ is chosen with an adequate probability, we finish by generating a random word with $m+1$ letters $x$, $m$ letters $\bar x$, $n-m$ letters $z$ and use the ``Cycle Lemma''\cite{DERSHOWITZ199035} to get the final Schröder path.

So we have:
\begin{eqnarray*}
  {\cal R}F_n(m) &=& \frac{(n+m+1)(n-m)}{(m+1)(m+2)} \\
  {\cal R}{\underline G}_M(m)&=&\frac{(n+m+1)(n-m)}{(m+2)(2M-m)} \\
  {\cal R}{\overline G}_M(m)&=&\frac{(n+m+1)(n-m)}{(m+2)(2M-m+[m=M]-[m=M-1])}
\end{eqnarray*}

We will finally need a lemma which is proved in the Appendix~\ref{proofS1}:
\begin{lemma}\label{lemmaS1}
  When $n>1$, 
  \begin{itemize}
  \item there exists a unique value $\tilde m$ in $[0 \ldots n]$ such that $f(x)>0$ with $x\in [0,n]$ if and only if $x<\tilde m$,
  \item $\tilde m=-1+\frac {\sqrt{2n^2+2n}}{2}$, $M=\lceil \tilde m \rceil \approx \frac n {\sqrt 2}$,
  \item $\frac{F_n}{F_n(M)}=\Omega(\sqrt n)$,
  \item ${\cal R}{\overline B}_M(m)< {\cal R}F_n(m)$ if and only if $m < M$,
  \end{itemize}
\end{lemma}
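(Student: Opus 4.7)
The plan is to handle the four bullets in order. The first, second, and fourth reduce to finite polynomial manipulation, while the third is an asymptotic statement requiring a short local analysis around $M$.

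For bullets one and two, I would expand
\[ f(x) = (n+x+1)(n-x) - (x+1)(x+2) = -2x^2 - 4x + (n^2+n-2), \]
a concave quadratic whose unique positive root is $\tilde m = -1 + \frac{\sqrt{2n^2+2n}}{2}$. For $n>1$ one verifies $0 < \tilde m \leq n$ (the upper bound reducing to $2n^2+2n \leq 4(n+1)^2$), so on $[0,n]$ the sign condition $f(x)>0 \Leftrightarrow x<\tilde m$ is immediate. The asymptotic $\tilde m \approx n/\sqrt 2$ and $M = \lceil \tilde m \rceil$ follow at once, and $M$ coincides with the first integer maximizer of $F_n$ because ${\cal R}F_n(m) > 1 \Leftrightarrow f(m) > 0 \Leftrightarrow m < \tilde m$.

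For bullet four I would compare ${\cal R}F_n(m)$ with ${\cal R}{\underline B}_M(m) = (2M-m)/(m+1)$, which governs ${\cal R}B_M(m)$ except at $m = M-1$ and $m = M$. Setting
\[ g(m) = (n+m+1)(n-m) - (2M-m)(m+2), \]
the quadratic terms in $m$ cancel, leaving the linear function $g(m) = (n^2+n-4M) + (1-2M)m$ with negative slope. Using the identity $g(m) = f(m) + (m+2)(2m+1-2M)$, substituting $m=\tilde m$ yields $g(\tilde m) = \tilde m + 2 > 0$, so the unique zero $m^*$ of $g$ satisfies $m^* > \tilde m$; a direct check that $g(M+1) < 0$ (equivalently $2M^2 + 5M - 1 > n^2 + n$, which follows from $f(M)\leq 0$ once $M$ is not tiny, with the very small $n$ handled by inspection) gives $m^* \in (\tilde m, M+1)$. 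Linear monotonicity then yields $g(m) > 0$ for $m \leq M-2$ and $g(m) < 0$ for $m \geq M+1$. The two boundary cases $m = M-1$ and $m = M$ are handled by the construction of $B_M$: there ${\cal R}B_M = 1$, and bullet one gives ${\cal R}F_n(M-1) > 1 \geq {\cal R}F_n(M)$, exactly as required.

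For bullet three, I would show that $F_n(m) \geq c\,F_n(M)$ on an interval of length $\Theta(\sqrt n)$ around $M$. Telescoping,
\[ \log\frac{F_n(M+k)}{F_n(M)} = \sum_{i=0}^{k-1} \log {\cal R}F_n(M+i), \]
and since ${\cal R}F_n(M) = 1 + O(1/n)$ while the derivative of ${\cal R}F_n$ at $m=M$ is of order $1/n$ (by differentiating the rational expression), each summand equals $-\Theta(i/n) + O(i^2/n^2)$; for $k \asymp \sqrt n$ the partial sum stays bounded, giving a uniform constant lower bound on $F_n(M+k)/F_n(M)$ across the window. Summing this bound yields $F_n = \Omega(\sqrt n)\,F_n(M)$, and the symmetric argument covers $k < 0$. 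The main obstacle is precisely this last bullet: controlling the error terms uniformly in $n$ so as to extract an absolute constant on the $\Theta(\sqrt n)$-wide window takes some care, whereas the other three bullets are routine polynomial arithmetic that reuses the sign analysis already carried out for the Fibonacci case.
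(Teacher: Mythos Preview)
Your approach matches the paper's overall structure, but there is a genuine error in your handling of bullet four. The identity $g(m) = f(m) + (m+2)(2m+1-2M)$ is correct, but substituting $m = \tilde m$ gives
\[
g(\tilde m) = (\tilde m+2)(2\tilde m+1-2M),
\]
not $\tilde m + 2$: you have silently replaced $M$ by $\tilde m$, whereas $M = \lceil \tilde m \rceil$. Since $M - \tilde m \in [0,1)$, the factor $2\tilde m + 1 - 2M$ ranges over $(-1,1]$ and is negative whenever $M - \tilde m > \tfrac12$. For instance at $n=3$ one has $\tilde m = \sqrt{6}-1 \approx 1.449$, $M=2$, and $g(\tilde m) \approx -0.35 < 0$; the zero of $g$ is $m^* = 4/3 < \tilde m$. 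So the implication $m^* > \tilde m$ fails, and with it your route to $g(M-2) > 0$.

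The paper sidesteps this by bounding $g(M-2) = -2M^2 + n^2 + M + n - 2$ directly: since this expression is decreasing in $M$, replacing $M$ by the upper bound $\tilde m + 1$ and using $(\tilde m+1)^2 = (n^2+n)/2$ yields $g(M-2) > \tfrac12\sqrt{2n^2+2n} - 2$, positive for $n \geq 3$ (the case $n=2$ being vacuous because then $M-2<0$). Your check of $g(M+1) \leq 0$ via $f(M)\leq 0$, and your treatment of $m \in \{M-1, M\}$ through ${\cal R}B_M=1$, are both fine and agree with the paper. Bullets one, two, and three are essentially as in the paper, the last differing only in that the paper makes the inequality explicit, ${\cal R}F_n(\tilde m + \epsilon) \geq 1 - 7\epsilon/\tilde m$, where you give a Taylor-expansion sketch.
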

which implies that the modified algorithm generates each word with uniform probability and has an average complexity of $O(n)$ using only numbers with $2log_2(n)$ bits\footnote{With Maple\cite{maple}, we can also compute the limit of the rate of acceptance in the main loop, we obtain $\lim_{n=\infty} C_n=2^{-3/4}=0.5946\ldots$}.

\section{Motzkin left factors with size $n$ and final height $h'$}\label{motzkin}
Many efficient algorithms are known to efficiently generate the Motzkin word with a fixed size, the simplest ones are based on a certain rejection method: Alonso\cite{ALONSO1994529}, Bacher et al.\cite{BACHER201742}, Brlek et al. \cite{Brlek2006NonUR}. Similarly, Barcucci et al.\cite{Barcucci1994TheRG} proposes an efficient algorithm based on a simple rejection method to generate Motzkin left factors of size $n$.

All these algorithms draw some sequence of $n$ small random numbers (often an integer between $1$ and $3$) and then accept/reject these sequences by doing $O(n)$ tests using numbers with $\log n$ bits. It seems difficult to extend these methods to generate a random Motzkin left factor of given size $n$ and final height $h'$ except:
\begin{itemize}
\item Brlek et al.\cite{Brlek2006NonUR} can be modified to transform a Motzkin left factor with overall height greater than or equal to $h$ into a Motzkin left factor with final height $h'$. This gives us a valid algorithm but its average complexity will remain in $O(n)$ only if $h'$ is small enough: $h'=O(\sqrt n)$,
\item Alonso\cite{ALONSO1994529} because this paper is an extension of this algorithm.
\end{itemize}

To simplify the notation, let us note $h=h'+1$. When $h$ is small enough: $h\leq\frac {3n-25} 7$, we can use the `Binomial Generator' and get a valid algorithm with average complexity in $O(n)$, but it fails when $h$ is larger. We will therefore choose the `Extended Binomial Generator' which allows us to obtain an algorithm\footnote{Of course, if we want to generate only a Motzkin word ($h=1$), this algorithm is less efficient than the others because it sometimes requires to use $2 \log n$ integer bits (instead of $\log n$ bits)} which works when $h<n-\frac 1 2-\frac{\sqrt{8n+9}}2$. Finally, to be exhaustive, we will prove that we can simply use the `Basic Generator' when $h\geq n-\frac 1 2-\frac{\sqrt{8n+9}}2$.

Indeed, using the ``Cycle Lemma''\cite{DERSHOWITZ199035}, we obtain the number of Motzkin left factors of final height $h'=h-1$ with $0<h\leq n+1$ :
\[F_n^h =\frac h {n+1} \sum_{m=0}^{\lfloor \frac{n+1-h}2\rfloor} \combi{n+1}{m,m+h} = \sum_{m=0}^{\lfloor \frac{n+1-h}2\rfloor} F_n^h(m)\]
where $F_n^h(m)=\frac h {n+1}\combi{n+1}{m,m+h}$ is the number of Motzkin left factor of final height $h-1$ that contains $n$ letters and $m$ letter $\bar x$ (ie. step $(1,-1)$). Therefore, after choosing $m$ with adequate probability, we can finish by mixing $m+h$ letters $x$ (ie. step $(1,1)$), $m$ letters $\bar x$ and $n+1-h-2m$ letters $z$ (ie. step $(1,0)$) and use the ``Lemma Cycle'' to get the final Motzkin left factor.

This give us:
\[ {\cal R}F_n^h(m) = \frac{(n+1-h-2m)(n-h-2m)}{(m+1)(m+1+h)}, \]
Now, we first need to establish some results about $F_n^h$ (this lemma is proved in the Appendix~\ref{proofMotz}):
\begin{lemma}\label{lemmaM1}
  When $n>1$, 
  \begin{itemize}
  \item there exists a unique value $\tilde m$ in $[0 \ldots \lfloor \frac{n+1-h}2\rfloor]$ such that $f(x)>0$ with $x\in [0,\lfloor \frac{n+1-h}2\rfloor]$ if and only if $x<\tilde m$,
  \item $\tilde m=\frac {2(n+1)}3 -\frac h 2 -\frac{\sqrt{4n^2+20n+28-3h^2}}6 \leq \frac {n-h} 3$, $M=\lceil \tilde m \rceil \approx \tilde m$,
  \item when $h<n-\frac 1 2 -\frac{\sqrt{8n+9}}2$, $\tilde m>1$, $M\geq 2$,
  \end{itemize}
  When $M\geq 2$, $\frac{F_n^h}{F_n^h(M)}=\Omega(\sqrt M)$.
\end{lemma}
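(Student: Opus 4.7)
The plan is to compute $f(x)$ in closed form, handle the three analytic bullets in order, and finish with the sum estimate.

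From ${\cal R}F_n^h(m)=\frac{(n+1-h-2m)(n-h-2m)}{(m+1)(m+1+h)}$ one obtains
\[ f(x) = 3x^2 - (4n-3h+4)x + (n-h)(n-h+1) - (h+1). \]
Since the leading coefficient is $3>0$, $f$ is a convex quadratic with at most two real roots. In the regime where the lemma is nontrivial I would check that $f(0)\geq 0$ and that $f$ becomes negative at the right endpoint $\lfloor (n+1-h)/2\rfloor$ (where the factor $n+1-h-2m$ of ${\cal R}F_n^h$ vanishes or flips sign), so exactly one root of $f$ lies in the interval: the smaller one, $\tilde m$. Applying the quadratic formula and simplifying the discriminant gives $\Delta=(4n-3h+4)^2-12[(n-h)(n-h+1)-(h+1)]=4n^2+20n+28-3h^2$, producing the stated closed form. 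The bound $\tilde m\leq (n-h)/3$ I would establish by isolating the square root and squaring; the inequality reduces to $h(h-n-2)\leq n+3$, which is automatic whenever $h\leq n+2$.

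For the third bullet, the key observation is that $\tilde m>1$ is equivalent to $f(1)>0$ (the larger root of $f$ exceeds $1$ throughout the range of interest, by a direct check using the discriminant). Substituting $x=1$ gives
\[ f(1) = h^2 + (1-2n)h + (n^2-3n-2), \]
a quadratic in $h$ whose roots are $n-\frac{1}{2} \pm \frac{\sqrt{8n+9}}{2}$. The upper root exceeds $n$ while $h\leq n+1$, so only the lower branch is relevant, and $f(1)>0$ iff $h<n-\frac{1}{2}-\frac{\sqrt{8n+9}}{2}$, exactly the stated condition; then $\tilde m>1$ forces $M=\lceil\tilde m\rceil\geq 2$.

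The main obstacle is the final claim $F_n^h/F_n^h(M)=\Omega(\sqrt M)$ when $M\geq 2$. For $M$ bounded this is trivial since the sum already contains $F_n^h(M)$ and $\sqrt M$ is bounded. For large $M$ my plan is to compare $F_n^h(M+j)/F_n^h(M)$ to a Gaussian: since $\tilde m\leq (n-h)/3$, each of $M+1$, $M+1+h$, $n+1-h-2M$, $n-h-2M$ is of order comparable to $M$, so a first-order Taylor expansion of $\log {\cal R}F_n^h(M+i)$ in $i$ around $M$ yields $\log[F_n^h(M+j)/F_n^h(M)] = -\Theta(j^2/M)$ for $|j|=O(\sqrt M)$; summing this over $|j|=O(\sqrt M)$ produces at least $c\sqrt M\, F_n^h(M)$, giving the required bound. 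The delicate step is making the constants in the $\Theta$ uniform as $h$ approaches the critical threshold where one of those factors can get small; here I would separate off the near-threshold regime and use the hypothesis $M\geq 2$ together with direct estimates on the four factors in ${\cal R}F_n^h$, in the same spirit as the analysis of the extended binomial in Lemma~\ref{lemmaG2}.
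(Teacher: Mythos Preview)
Your treatment of the first three bullets is essentially the paper's: expand $f$, locate its roots, solve the quadratic. Your derivation of the discriminant and of the bound $\tilde m\le(n-h)/3$ is in fact more explicit than the paper's (which simply asserts the bound). For the third bullet the paper instead solves $\tilde m=1$ for $h$ and uses that $\partial\tilde m/\partial h<0$; your route via $f(1)$ as a quadratic in $h$ is equivalent and arguably cleaner.

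Where you diverge is the last claim, and there you are making your life harder than necessary. The paper does not use a Taylor/Gaussian comparison; it writes
\[
{\cal R}F_n^h(\tilde m+\epsilon)=\frac{\bigl(1-\frac{2\epsilon}{n-h-2\tilde m}\bigr)\bigl(1-\frac{2\epsilon}{n+1-h-2\tilde m}\bigr)}{\bigl(1+\frac{\epsilon}{\tilde m+1}\bigr)\bigl(1+\frac{\epsilon}{\tilde m+h+1}\bigr)}\ \ge\ 1-\frac{6\epsilon}{\tilde m},
\]
the last inequality coming directly from $\tilde m\le(n-h)/3$, which gives $n-h-2\tilde m\ge\tilde m$ and hence every one of the four perturbation terms is bounded by $\epsilon/\tilde m$ or $2\epsilon/\tilde m$. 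Telescoping over $j=0,\dots,i-1$ yields $F_n^h(M+i)/F_n^h(M)\ge 1-3i(i+1)/\tilde m$, so the terms with $0\le i\le c\sqrt{\tilde m}$ each contribute at least $\tfrac12 F_n^h(M)$.

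Two remarks. First, your assertion that ``each of $M{+}1$, $M{+}1{+}h$, $n{+}1{-}h{-}2M$, $n{-}h{-}2M$ is of order comparable to $M$'' is not true in general: $M{+}1{+}h$ can be of order $h\gg M$, and $n{-}h{-}2M$ can be of order $n{-}h\gg M$. This does not hurt (larger denominators only help the lower bound), but it means the correct uniform statement is ``each is at least of order $\tilde m$'', which is exactly what $\tilde m\le(n-h)/3$ delivers. Second, the ``delicate near-threshold regime'' you flag does not exist: the same inequality $n-h-2\tilde m\ge\tilde m$ holds uniformly in $h$, so no separate case analysis is needed. Use the bound you already proved in the second bullet and the argument collapses to three lines.
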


\subsection{Generation of a Motzkin left factor when $h<n-\frac 1 2-\frac{\sqrt{8n+9}}2$}
We can now fix the values of $k$ and $\alpha$ in order to instantiate the Abstract Algorithm (and prove that these values give a valid algorithm with linear complexity).
We choose:
\begin{itemize}
\item $k$ as the smallest integer such that $(k+1)M \geq (k+1)\tilde m\geq \frac{n+1-h}2$ : $k=\lceil \frac{n+1-h}{2\tilde m} \rceil -1$,
\item $\alpha$ as the smallest integer between $0$ and $k-1$ which makes the root of ${\underline g}(x)=0$ close of $M$ : $\alpha=max(k-1,\lfloor k-k(M-\tilde m)\rfloor)$.
\end{itemize}
We have:
\begin{eqnarray*}
  {\cal R}{\underline G}^{k,\alpha}_M(m) &=& \frac{k(n+1-h-2m)(n-h-2m)}{(m+1+h)((k+1)M+\alpha-m)} \\
  {\cal R}{\overline G}^{k,\alpha}_M(m) &=& {\cal R}{\underline G}^{k,\alpha}_M(m)\text{ when }m<M-1\text{ or }m\geq M+1 \\
\end{eqnarray*}
and when $B^{k,\alpha}_M(M)=B^{k,\alpha}_M(M+1)$
\begin{eqnarray*}
  {\cal R}{\overline G}^{k,\alpha}_M(M-1)&=& \frac{k^2(n+3-h-2M)(n+2-h-2M)(M+1)}{(M+h)(k\,M+\alpha+1)(k\,M+\alpha)} \\
  {\cal R}{\overline G}^{k,\alpha}_M(M)&=& {\cal R}F_n^h(M)
\end{eqnarray*}
and when $B^{k,\alpha}_M(M)=B^{k,\alpha}_M(M-1)$
\begin{eqnarray*}
  {\cal R}{\overline G}^{k,\alpha}_M(M-1)&=& {\cal R}F_n^h(M-1)\\
  {\cal R}{\overline G}^{k,\alpha}_M(M)&=& \frac{k^2(n+1-h-2M)(n-h-2M)M}{(M+1+h)(k\,M+\alpha+1)(k\,M+\alpha)}.
\end{eqnarray*}

Using the Proposition~\ref{complexity}, we can prove that the complexity of the algorithm is on average linear in $n$. It only remains to prove that the algorithm is valid: ${\cal R}G_M^{k,\alpha}(m)>1$ iff $m<M$; this is proved with the following lemma which is proved in the Appendix~\ref{proofMotz}:
\begin{lemma}\label{lemmaM2}
  We have when $m\leq M-1$, ${\cal R}G_M^{k,\alpha}(m)>1$ and when $m\geq M$, ${\cal R}G_M^{k,\alpha}(m)\leq 1$.

  Moreever:
  \begin{itemize}
  \item when $B_M^{k,\alpha}(M)=B_M^{k,\alpha}(M+1)$, $\frac{k(n+3-h-2M)(n+2-h-2M)}{(M+h)(k\,M+\alpha)}>1$,
  \item when $B_M^{k,\alpha}(M)=B_M^{k,\alpha}(M-1)$, $\frac{k(n+1-h-2M)(n-h-2M)}{(M+1+h)(k\,M+1+\alpha)}\leq 1$
  \end{itemize}
\end{lemma}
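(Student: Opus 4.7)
My plan is to split the analysis by the value of $m$: first the generic range $m\notin\{M-1,M\}$ where ${\cal R}B_M^{k,\alpha}={\cal R}\underline B_M^{k,\alpha}$, then the two endpoints $m=M-1$ and $m=M$, each treated under the two sub-cases $B_M^{k,\alpha}(M)=B_M^{k,\alpha}(M+1)$ and $B_M^{k,\alpha}(M)=B_M^{k,\alpha}(M-1)$.

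\emph{Generic range.} For $m\notin\{M-1,M\}$, the sign of ${\cal R}G_M^{k,\alpha}(m)-1$ coincides with the sign of
\[\underline g(x)=k(n+1-h-2x)(n-h-2x)-(x+1+h)\bigl((k+1)M+\alpha-x\bigr),\]
a quadratic in $x$ with leading coefficient $4k+1>0$. The choice $\alpha=\max(k-1,\lfloor k-k(M-\tilde m)\rfloor)$ is calibrated so that the smaller root of $\underline g$ sits in $(M-2,M+1)$, while the larger root lies above the feasible bound $\lfloor(n+1-h)/2\rfloor$. I would verify both claims by evaluating $\underline g$ at $M-1$, $M+1$ and at the right endpoint, using the identity $f(\tilde m)=0$ from Lemma~\ref{lemmaM1}. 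This yields $\underline g(m)>0$ for $m\leq M-2$ and $\underline g(m)<0$ for $M+1\leq m\leq\lfloor(n+1-h)/2\rfloor$, as required.

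\emph{Endpoints.} If $B_M^{k,\alpha}(M)=B_M^{k,\alpha}(M+1)$, then ${\cal R}G_M^{k,\alpha}(M)={\cal R}F_n^h(M)\leq 1$ directly from $M\geq\tilde m$ and Lemma~\ref{lemmaM1}, while the excerpt's formula rewrites as
\[{\cal R}G_M^{k,\alpha}(M-1)=\frac{k(n+3-h-2M)(n+2-h-2M)}{(M+h)(kM+\alpha)}\cdot\frac{k(M+1)}{kM+\alpha+1},\]
whose second factor is $\geq 1$ since $\alpha\leq k-1$; the strict inequality ${\cal R}G_M^{k,\alpha}(M-1)>1$ then follows from bullet~2. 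The other case $B_M^{k,\alpha}(M)=B_M^{k,\alpha}(M-1)$ is symmetric: ${\cal R}G_M^{k,\alpha}(M-1)={\cal R}F_n^h(M-1)>1$ because $M-1<\tilde m$ always holds (as $M=\lceil\tilde m\rceil$), and ${\cal R}G_M^{k,\alpha}(M)\leq 1$ reduces to bullet~3 after dropping the factor $kM/(kM+\alpha)\leq 1$.

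\emph{The hard part: bullets~2 and~3.} The case condition $B_M^{k,\alpha}(M)=B_M^{k,\alpha}(M+1)$ unfolds (via the formula for ${\cal R}\underline B_M^{k,\alpha}$) to $(kM+\alpha)(kM+\alpha+1)\geq k^2M(M+1)$, and Case B to the reverse inequality. Combining either with the identity $(n+1-h-2\tilde m)(n-h-2\tilde m)=(\tilde m+1)(\tilde m+1+h)$ defining $\tilde m$, together with the pinch $k-k(M-\tilde m)-1<\alpha\leq k-k(M-\tilde m)$ (or $\alpha=k-1$) coming from the definition of $\alpha$, one gets the sharp estimates comparing $(n+j-h-2M)(n+j-1-h-2M)$ (for $j=1,3$) to the relevant denominators. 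I expect the main obstacle to be the bookkeeping around the floor and the boundary case $\alpha=k-1$ (triggered when $k(M-\tilde m)<1$, in particular when $\tilde m$ is an integer); these sub-cases are elementary but best enumerated explicitly in the appendix.
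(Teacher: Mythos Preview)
Your plan mirrors the paper's proof closely: split off $m\in\{M-1,M\}$, reduce the generic range to the sign of the quadratic $\underline g$, and then handle the two bullets by combining the case condition $(kM+\alpha)(kM+\alpha+1)\gtrless k^2M(M+1)$ with the pinch on $\alpha$ and the relation $f(\tilde m)=0$. One small slip: to place the smaller root of $\underline g$ in $(M-2,M+1)$ you must check $\underline g(M-2)>0$, not $\underline g(M-1)>0$; indeed the paper shows only $\underline g(\tilde m-1)>0$ and $\underline g(\tilde m+1)<0$, and since $M-1\in(\tilde m-1,\tilde m]$ while $\underline g(\tilde m)$ can be negative, $\underline g(M-1)>0$ need not hold in general, whereas $M-2\le\tilde m-1$ does give $\underline g(M-2)>0$.
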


So we can update the main loop of the previous algorithm to draw $m$ with probability $\frac {F_n^h(m)}{F_n^h}$ to get an algorithm that works when $h\leq n-\frac 1 2 -\frac{\sqrt{8n+9}}2$ in average time in $O(n)$ and we only use numbers with less than $2\log_2(n)$ bits: see Algorithms~\ref{algAcceptMotzkin} and \ref{algMotzkin}.

\begin{algorithm}[H]
\caption{Accept $m$ with probability $\frac{F_n^h(m)}{{\overline B}_M^{k,\alpha}(m)}$ when $1\leq \delta\leq k-1$:}\label{algAcceptMotzkin}
\begin{algorithmic}
  \Function{accept\_m}{$M,k,\alpha,which,m$}
  \IIf {$m>(n+1-h)/2$} reject $m$ \EndIIf
  \If {$m<M$}
  \State \Comment{accept $m$ with probability ${\cal R}{\overline B}_M^{h,\alpha}(M-1)/{\cal R}F_n^h(M-1)$}
  \If {$which=\text{'M+1'}$}
    \IIf{$random(k(M+1))\geq k\,M+\alpha+1$} reject $m$ \EndIIf
    \State $v\gets random(k(n+3-h-2M)(n+2-h-2M))$
    \IIf{$v\geq (M+h)(k\,M+\alpha)$} reject $m$ \EndIIf
  \Else
    \State $v\gets random((n+3-h-2M)(n+2-h-2M))$
    \IIf{$v\geq M(M+h)$} reject $m$ \EndIIf
  \EndIf
  \ForAll{$i \in [m,M-2]$}
  \State \Comment{accept $m$ with probability ${\cal R}{\overline B}_M^{h,\alpha}(i)/{\cal R}F_n^h(i)$}
  \State $v\gets random(k(n+1-h-2i)(n-h-2i))$
  \IIf{$v\geq (i+1+h)((k+1)M+\alpha-i)$} reject $m$ \EndIIf
  \EndFor
  \Else
  \State \Comment{accept $m$ with probability ${\cal R}{\overline B}_M^{h,\alpha}(M)/{\cal R}F_n^h(M)$}
  \If {$which=\text{'M-1'}$}
    \IIf{$random(k\,M+\alpha)\geq k\,M$} reject $m$ \EndIIf
    \State $v\gets random((M+h+1)(k\,M+\alpha+1))$
    \IIf{$v\geq k(n+1-h-2M)(n-h-2M)$} reject $m$ \EndIIf
  \Else
    \State $v\gets random((M+1)(M+1+h))$
    \IIf{$v\geq (n+1-h-2M)(n-h-2M)$} reject $m$ \EndIIf
  \EndIf
  \ForAll{$i \in [M+1,m-1]$}
  \State \Comment{accept $m$ with probability ${\cal R}F_n^h(i)/{\cal R}{\overline B}_M^{h,\alpha}(i)$}
  \State $v\gets random((i+1+h)((k+1)M+\alpha-i))$
  \IIf{$v\geq k(n+1-h-2i)(n-h-2i)$} reject $m$ \EndIIf
  \EndFor
  \EndIf
  \State accept $m$
  \EndFunction

\end{algorithmic}
\end{algorithm}

\begin{algorithm}[H]
\caption{Draw a Motzkin Left Factor with size $m$ and final height $h-1$}\label{algMotzkin}
\begin{algorithmic}[1]
  \IIf {$h\geq n+\frac 1 2- \frac{\sqrt{8n+9}}2$} FAILS \EndIIf \Comment{$h$ is too big}
  \State $M \gets \text{first value of $m$ which maximalizes $F_n^h(m)$}$
  \State $\tilde m \gets \frac{4n+4-3h-\sqrt{4n^2+20n+28-3h^2}}6$
  \State $k \gets \lceil\frac{n+1-h}{2\tilde m}\rceil-1$
  \State $\alpha \gets max(k-1,k-\lceil k(M-\tilde m)\rceil)$
  \If {$(k\,M+\alpha+1)(k\,M+\alpha)\geq k^2M(M+1)$}
  \State $which=\text{'M+1'}$
  \Else
  \State $which=\text{'M-1'}$
  \EndIf
  \While{true}
  \State \Comment{draw $m$ with probability $\frac {B_M^{k,\alpha}(m)} {\sum_{i=0}^{(k+1)M}B_M^{k,\alpha}(i)}$}
  \State $m \gets \textsc{extended\_bin}(M,k,\alpha)$
  \State \Comment{accept $m$ with probability $F_n^h(m)/{\overline B}_M^{k,\alpha}(m)$}
  \IIf {$\textsc{accept\_m}(M,k,\alpha,which,m)$} break \EndIIf
  \EndWhile
  \State \Comment{draw a Motzkin left factor of size $n$ with $m+h-1$ letters $x$ and $m$ letters $\bar x$}
\end{algorithmic}
\end{algorithm}
We can verify that, except when we check if $h$ is small enough (line $1$) or when we compute $k$ and $\alpha$ (lines 3-5), we use only $2\log_2(n)$ bits numbers. However, it seems more difficult to completely avoid using some floating-point computations in the pre-computation phase of the algorithm (lines 1-5)\footnote{The most problematic problem is the computation of $k$. In fact, it seems possible (as $M\geq 2$) to replace this computation by $k\gets \lceil\frac{n+1-h}{2(M-1)}\rceil-1$ but this would mean completely rewriting the proofs of Lemma~\ref{lemmaM2}. Concerning $\alpha$, it is enough to find a value of $\alpha\ in [0,k-1]$ such that $\max({\cal R}G(M), {\cal R}G(M+1)) \leq 1 \leq \min({\cal R}G(M-2), {\cal R}G(M-1))$, which can be done by testing all potential values in $O(k)$.}.

\subsection{Generation of a Motzkin left factor when $h\geq n-\frac 1 2-\frac{\sqrt{8n+9}}2$}

In this case, we know that $M$ is equal to $0$ or $1$ ; we need to find a value $m$ in $[0,\lfloor \frac{n+1-h} 2\rfloor]$ with an adequate probability: $\frac {F_n^h(m)}{F_n^h}$.

Let us use the `Basic Generator' with $k=\lfloor \frac{n+1-h} 2\rfloor\leq \frac 1 4+\frac{\sqrt{8n+9}}4=O(\sqrt n)$. We have ${\cal R}G_M^k(m)={\cal R}F_n^h(m)$ and therefore ${\cal R}G_M^k(m)>1$ if and only if $m<M$, so the algorithm is valid.

We can apply Proposition~\ref{complexity} with $C_1=O(1)$, $C_2=O(k)$, $\frac{F(M)}{\sum_m F(m)}=O(1)$, $\frac{\sum_m B(m)}{B(M)}=k+1$ to prove that the total complexity of the main loop is, on average, $O(1+k^2)=O(n)$ as desired.

\section{Discussion}
Many recent generation algorithms use rejection methods to efficiently generate simple structures. Most of them are based on clever one-to-one correspondance. Here we try to look for methods that try to break the problem into small parts: find some $m$ with adequate probability and get a simple problem to solve. As far as we know, this method has been used to generate Motzkin words, partial injection, colored unary-binary trees~\cite{GouyouBeauchamps2010RandomGU}.

After adapting differently this method to Fibonacci numbers, we obtain an efficient and very easy to implement algorithm and were surprised to find that we can use it with minor modifications to generate Schroëder trees\footnote{In fact, it seems to give a $O(n)$ algorithm for many similar problems: Motzkin words, binary tree forest, ternary tree forest, ... which are not explained in this article and also a $O(n^{5/4})$ algorithm for partial injection}. To study the limits of this kind of algorithms, we try to generate Motzkin left factors with final height $h$: a problem that seems more complicated because the number of words varies a lot: from $M_n^0\approx \frac{3^n}{n\sqrt n}$ to $M_n^n=1$ \footnote{also note that we have $\ln(M_n^{n-\sqrt{2n}}) \approx \frac {\sqrt 2} 2 \sqrt n\ln n$ as the `Extended Binomial Generator' only works when $h\leq n-\frac 1 2-\frac{\sqrt{8n+9}}2$} ; we needed to extend the 'Binomial Generator' but we obtained a final algorithm which is efficient and can still be implemented relatively easily.

Note also that, as in~\cite{GouyouBeauchamps2010RandomGU}, we looked for generators that attempt to reproduce the desired distribution: same value of $M$ that maximizes the initial and final distributions, and similar "dispersion" (within a constant factor) and we found that at least for these three problems, there are some that are really simple. But is this a good idea or a misleading one? That will be the subject of a future article\cite{AlonsoRandomII}. Let's just note here that this article will allow to implement the `Binomail generator' and the `Extended binomial generator' more efficiently (and thus to decrease the average complexity of the algorithms presented in the article).

What more can we say? Perhaps, that the problems presented in this article correspond to numbers whose asymptotics are in $l(n)C^n$ where $C$ is a solution of a polynomial of degree $2$; this probably explains why we can solve them using numbers with only $2\,log_2(n)$ bits. When $C$ is a solution of a polynomial of degree $3$, $\ldots$, this method (when valid) will probably require using numbers of $3\,log_2(n)$ bits, $\ldots$.

Can we do better ? For example, can we find an algorithm that uses only integers smaller than $n$ to generate a Fibonacci word of 'size' $n$ ?

\paragraph{Comparaison with Gouyou-Beauchamps et all~\cite{GouyouBeauchamps2010RandomGU}}
In~\cite{GouyouBeauchamps2010RandomGU}, they present a generic algorithm for drawing a number $m$ per rejection, so that the distributions we obtained `automatically' can be used to define their initial distribution ${\overline B}_n(m)$, a distribution that allows us to find efficient algorithms for accepting $m$ with probability $\frac{F_n(m)}{{\overline B}_n(m)}$. In fact, we can think of our algorithm as an instantiation of \cite{GouyouBeauchamps2010RandomGU} that uses a fixed method to accept a value of $m$ with probability $\frac{F_n(m)}{{\overline B}_n(m)}$. Fixing this method allows us to quickly guess whether we have a chance of finding an efficient algorithm for certain structures by choosing a generator and checking (for certain values of $n$) whether ${\cal R}{\overline B}_n(m)<{\cal R}F_n(m)$ iff $m<M$. Of course, this also restricts the potential choices of ${\overline B}_n(m)$ but, at least for the structures we examined, we were able to quickly obtain an efficient algorithm.

We may also note that we pay a price: even though we get similar complexity, the proofs are more complicated and the algorithm used to accept a value of $m$ when generating a Motzkin left factor with $h=0$ is less elegant than the algorithm presented in \cite{ALONSO1994529} ; this will also be true if we try to instantiate our method with the `Binomial Generator' to generate partial injections and compare it with \cite{GouyouBeauchamps2010RandomGU}\footnote{For this problem, it is better to use the `Extended Binomial Generator' with $k=\lceil\sqrt n\rceil+1$ to choose $n-m$, which gives a valid $O(n)$ complexity algorithm that is valid when $n\geq 2$.}.

Finally, notice that there is a slight difference in the distribution used to guess the initial value of $m$,
\begin{itemize}
\item we choose standard binomial distributions, but we modify them slightly by decreasing the value of ${\overline B}_n(M)$ (to force ${\overline B}_n(M)={\overline B}_n(M\pm1)$),
\item in \cite{GouyouBeauchamps2010RandomGU}, they use the standard binomial distribution which explains that they need to choose a value of ${\overline B}_n(M)=$ which is sometimes larger than $F_n(M)$.
\end{itemize}

\bibliographystyle{alpha}
\bibliography{random_generation}

\clearpage
\appendix
\section{Appendices: proofs}
\subsection{Appendix: `Extended Binomial Generator's proof}\label{proofG2}
\begin{proof}{Proof of lemma~\ref{lemmaG2}}
  We have:
  \[ \frac{{\underline B}_M^{k,\alpha}(m+1)}{{\underline B}_M^{k,\alpha}(m)}=\frac{(k+1)M+\alpha-m}{k(m+1)} \leq 1\]
  iff $(k+1)M+\alpha-m\leq k(m+1)$ iff $M+\frac{\alpha-k}{k+1}\leq m$ iff $M\leq m$. 
  
  Let us first prove that the relations are valid for ${\underline B}_M$. Let us note:
  \[l(\alpha)=\frac{{\underline B}_M^{k,\alpha}(M)}{\sum_m {\underline B}_M^{k,\alpha}(m)}=\frac{((k+1)M+\alpha)!k^{k\,M+\alpha}}{M!(k\,M+\alpha)!(k+1)^{(k+1)M+\alpha}}.\]

  Using the Stirling Formula, we get:
  \[ l(0)=\frac{((k+1)M)!k^{k\,M}}{M!(k\,M)!(k+1)^{(k+1)M}} \approx \frac {\sqrt{2(k+1)}}{2\sqrt{\pi\,k\,M}}\]
  and more precisely by using $\sqrt{2\pi\,n}\left(\frac n e\right)^n\leq n!\leq \sqrt{2\pi\,n}\left(\frac n e\right)^n \exp\left(\frac 1 {12n}\right)$:
  \begin{eqnarray*}
    0.84 &\leq& exp\left(-\frac 1 6\right)\leq exp\left(-\frac 1 {12\,M}-\frac 1 {12\,k\,M}\right) \leq \frac{2\sqrt{\pi\,k\,M}l(0)}{\sqrt{2(k+1)}} \\
    & \leq & exp\left(\frac 1 {12(k+1)M}\right) \leq exp\left(\frac 1 {24}\right) \leq 1.05.
  \end{eqnarray*}
  We also have:
  \[\frac{l(\alpha)}{l(\alpha-1)}=\frac{k((k+1)M+\alpha)}{(k+1)(k\,M+\alpha)}=1-\frac{\alpha}{(k+1)(k\,M+\alpha)}\]
  therefore we obtain: $l(\alpha)\leq l(0)$ and
  \[\frac{l(\alpha)}{l(0)}\geq 1-\sum_{i=1}^{\alpha} \frac i {(k+1)(k\,M+i)}\geq 1 - \frac{\alpha(\alpha+1)}{2 (k+1)(k\,M+1)}\geq 1 - \frac{k}{2(k\,M+1)}>\frac 1 2.\]

  Finally, note that $n=(k+1)M+\alpha$, with a constant factor $(k+1)^n$, ${\underline B}_M$ corresponds to the binomial distribution with $p=\frac 1 {k+1}$, $q=1-p$. The variance of ${\underline B}_M$ is therefore $n,p,q=O\left(\frac{(M+1)(k+1)k}{(k+1)^2}\right)=O(M+1)$. Using Jensen's inequality, we obtain: $\frac{\sum_m |m-M|{\underline B}_M^{k,\alpha}(m)}{\sum_m {\underline B}_M^{k,\alpha}(m)}=O(\sqrt M)$.

  We end by noting that these relations remain valid for $B_M$ as $B_M(M)=\frac{kM}{kM+\alpha+1}{\underline B}_M(M)$ or $B_M(M)=\frac{kM+\alpha}{k(M+1)}{\underline B}_M(M)$ and thus $B_M(M)={\underline B}_M(M)(1-O(1/M))$.
\end{proof}
\subsection{Appendix: Fibonacci's proofs}\label{proofFib}
\begin{proof}{Proof of Lemma~\ref{lemma1}}
  Basic computations give $f'(x)=-5n+10x+3$, $f(0)=n^2-2n\geq 0$ if $n>1$, $f(\lfloor n/2\rfloor)=-\lceil n/2\rceil(\lfloor n/2\rfloor+1)<0$ if $n\geq 1$. This implies that there exists a unique value of $x\in [0,\lfloor n/2\rfloor]$: $\tilde m$ such that $f(x)>0$ if and only $x<\tilde m$.

  Using ${\cal R}F_n(x)>1$ if and only if $f(x)>0$ with ${\cal R}F_n(m)=F_n(m+1)/F_n(m)$, we get $M=\lceil \tilde m \rceil$. Solving $f(\tilde m)=0$ in $[0,\lfloor n/2\rfloor]$ gives:
  \[\tilde m=\frac {5n-3-\sqrt{5n^2+10n+9}}{10}=\frac{5-\sqrt 5}{10}n-\frac{3+\sqrt{5}}{10}+O(1/n) \approx \frac{5-\sqrt 5}{10}n.\]

  Suppose that $\epsilon\geq 0$ and $n>2$, we have:
  \begin{eqnarray*}
    {\cal R}F_n(\tilde m+\epsilon)&=&\frac{(n-2\tilde m)(n-1-2\tilde m)}{(\tilde m+1)(n-\tilde m)}\frac{(1-2\frac{\epsilon}{n-2\tilde m})(1-2\frac{\epsilon}{n-1-2\tilde m})}{(1+\frac{\epsilon}{\tilde m+1})(1-\frac{\epsilon}{n-\tilde m})}\\
    & \geq & 1 (1-2\frac{\epsilon}{n-2\tilde m}-2\frac{\epsilon}{n-1-2\tilde m}-\frac{\epsilon}{\tilde m+1}+0) \geq 1-\frac{5\epsilon}{\tilde m}
  \end{eqnarray*}
  by noting that $\tilde m\leq \frac{n-1} 3$ (and so $n-1-2\tilde m\geq \tilde m$).
  Therefore, when $i\geq 0$, we get ${\cal R}F_n(M+i)\geq 1-\frac{5\,i}{\tilde m}$ and when $i<\frac {\tilde m}5$, $\frac{F_n(M+i)}{F_n(M)}\geq \prod_{j=0}^{i-1}(1-\frac{5\,(j+1)}{\tilde m}) \geq 1-\frac{5i(i+1)}{2 \tilde m}$. This implies that when $0\leq i \leq \frac{\sqrt{\tilde m-1}}{\sqrt 5}$, $F_n(M+i)\geq \frac 1 2 F(M)$ and thus $F_n=\Omega(\sqrt M)F_n(M)=\Omega(\sqrt n)F_n(M)$.
\end{proof}
\begin{proof}{Proof of Lemma~\ref{lemma2}}
  We have ${\cal R}{\overline B}_M(M-1)={\cal R}{\overline B}_M(M)=1$ and using the previous lemma, ${\cal R}F_n(M-1)>1$, ${\cal R}F_n(M)\leq 1$. This validates the inequality for $m=M-1$ and $m=M$.
  
  For other values of $m$, we have ${\cal R}{\overline B}_M(m)< {\cal R}F_n(m)$ if and
  only if ${\underline g}(m)=(n-2m)(n-2m-1)-(2M-m)(n-m)>0$.  We have ${\underline g}(0)=n(n-2M-1)>0$,
  ${\underline g}(n/2-1)=-(2M+1-n/2)(n/2-1)<0$ when $n>2$,
  ${\underline g}'(x)=6x-3n+2M+2$. This implies that when $x\in [0,n/2-1]$, there exists a value $\tilde x$
  such that ${\underline g}(x)\geq 0$ iff $x\leq
  \tilde x$.

  It is therefore sufficient to prove that ${\underline g}(M-2)>0$ and ${\underline g}(M+1)\leq 0$.

  Using $M=\lceil \tilde m\rceil$, we obtain (as $\frac {\delta {\underline g}(M-2)}{\delta M}<0$):
  \begin{eqnarray*}
    {\underline g}(M-2)&=&5M^2-(5n+14)M+n^2+5n+8 \\
    & \geq & 5(\tilde m+1)^2-(5n+14)(\tilde m+1)+n^2+5n+8 \\
    &=&\frac{11}{10}-\frac {3n}2+\frac{7\sqrt{5n^2+10n+9}}{10}>0.
  \end{eqnarray*}

  Similarly, we get (as $\frac {\delta {\underline g}(M+1)}{\delta M}<0$ when $n>6$): 
  \begin{eqnarray*}
    {\underline g}(M+1)&=&5M^2-(5n-10)M+n^2-4n+5 \\
    & \leq & 5\tilde m^2-(5n-10)\tilde m+n^2-4n+5 \\
    & = & \frac{29}{10}+\frac {3n}2-\frac{7\sqrt{5n^2+10n+9}}{10},
  \end{eqnarray*}
  which is less than or equal to 0 when $n\geq 20$.

  Finally, we can calculate the value of ${\underline g}(M+1)$ when $n<20$, we find:
  \begin{tabular}{|l|c|c|c|c|c|c|c|c|c|c|c|c|c|c|c|c|c|c|c|c|}
    \hline
    $n$ & 0 & 1 & 2 & 3 & 4 & 5 & 6 & 7 & 8 & 9 \\
    \hline
    $M$ & 0& 0& 0& 1& 1& 1& 2& 2& 2& 2 \\
    \hline
    ${\underline g}(M+1)$ & 5& 2& 1& 2& 0& 0& -3& -4& -3& 0\\
    \hline
  \end{tabular} \\
  \begin{tabular}{|l|c|c|c|c|c|c|c|c|c|c|}
    \hline
    $n$ & 10 & 11 & 12 & 13 & 14 & 15 & 16 & 17 & 18 & 19 \\
    \hline
    $M$ & 3& 3& 3& 4& 4& 4& 4& 5& 5& 5 \\
    \hline
    ${\underline g}(M+1)$ & -10& -8& -4& -18& -15& -10& -3& -24& -18& -10 \\
    \hline
  \end{tabular} \\
  which ends the proof by noting that when $n\leq 3$, there exist no word with $M+2$ letters $|b|$.
\end{proof}
\subsection{Appendix: Schröder's proof}\label{proofS1}
\begin{proof}{Proof of Lemma~\ref{lemmaS1}}
  Indeed, ${\cal R}F_n(x)>1$ if and only if $f(x)=(n-x)(n+1+x)-(x+1)(x+2)>0$. Basic computations give $f'(x)=-4x-4$, $f(0)=n(n+1)-2\geq 0$ if $n\geq 1$, $f(n)=-(n+1)(n+2)<0$. This implies that exists a unique value of $x\in [0,n]$: $\tilde m$ such that $f(x)>0$ if and only $x<\tilde m$.

  The relation ${\cal R}F_n(x)>1$ if and only if $f(x)>0$ with ${\cal R}F_n(m)=F_n(m+1)/F_n(m)$ gives $M=\lceil \tilde m \rceil$. By solving $f(\tilde m)=0$ in $[0,n]$, we obtain:
  \[\tilde m=-1+\frac {\sqrt{2n^2+2n}}{2}=\frac n {\sqrt 2}-\frac {4-\sqrt 2} 4+O(1/n) \approx \frac n {\sqrt 2}.\]
  We can also note that $\tilde m<\frac n {\sqrt 2}$ (and so $n-\tilde m\geq (\sqrt 2 -1)\tilde m)$.
  
  To prove that $F_n=\Omega(\sqrt n)F_n(M)$, we can imitate the proof of Lemma~\ref{lemma1}. Indeed, assuming that $\epsilon\geq 0$, we have :
  \[{\cal R}F_n(\tilde m+\epsilon)=\frac{(1-\frac{\epsilon}{n-\tilde m})(1-\frac{\epsilon}{n+1-\tilde m})}{(1+\frac{\epsilon}{\tilde m+1})(1+\frac{\epsilon}{\tilde m+2})}\geq 1 -\frac{7\epsilon}{\tilde m}.\]
  We obtain therefore when $0\leq i<\frac{\tilde m}7$, $\frac{F_n(M+i)}{F_n(M)}\geq 1-\frac{7\,i(i+1)}{2\tilde m}$, which gives us: $F_n=\Omega(\sqrt n)F_n(M)$.

  We have ${\cal R}{\overline B}_M(M-1)={\cal R}{\overline B}_M(M)=1$ and ${\cal R}F_n(M-1)>1$, ${\cal R}F_n(M)\leq 1$. This validates the last inequality for $m=M-1$ and $m=M$.
  
  For other values of $m$, we have ${\cal R}{\overline B}_M(m)< {\cal R}F_n(m)$ if and
  only if ${\underline g}(m)=(n-m)(n+1+m)-(2M-m)(m+2)>0$. We have ${\underline g}(0)=n^2+n-4M>0$, ${\underline g}(n)=-(2M-n)(n+2)<0$ (when $n>3$), ${\underline g}'(x)=1-2M$. This implies that when $x\in [0,n]$, there exists a value $\tilde x$
  such that ${\underline g}(x)\geq 0$ iff $x\leq \tilde x$. It is therefore sufficient to prove that ${\underline g}(M-2)>0$ and ${\underline g}(M+1)\leq 0$.

  Using $M=\lceil \tilde m\rceil$, we obtain
  \begin{eqnarray*}
    {\underline g}(M-2)&=&-2M^2+n^2+M+n-2>-2(\tilde m+1)^2+n^2+(\tilde m+1)+n-2\\
    &=&\frac{\sqrt{2n^2+2n}-4}2
  \end{eqnarray*}
  which is greater than $0$ when $n\geq 3$.

  Similarly,
  \[{\underline g}(M+1)=-2M^2+n^2-5M+n+1\leq -2\tilde m^2+n^2-5\tilde m+n+1=\frac{8-\sqrt{2n^2+2n}}2\]
  which is less than or equal to $0$ when $n\geq 6$.

  Finally, we can compute the values of ${\underline g}(M-2)$ and ${\underline g}(M+1)$ when $n<6$, we find:
  
  \begin{tabular}{|l|c|c|c|c|c|c|}
    \hline
    $n$ & 0 & 1 & 2 & 3 & 4 & 5 \\
    \hline
    $M$ & -1 & 0& 1& 2& 3& 3  \\
    \hline
    ${\underline g}(M-2)$ & -5& 0 & 3 & 4& 3& 13\\
    \hline
    ${\underline g}(M+1)$ & 4& 3& 0& -5& -12& -2 \\
    \hline
  \end{tabular} \\
\end{proof}
\subsection{Appendix: Motzkin left factor's proofs}\label{proofMotz}
\begin{proof}{Proof of Lemma~\ref{lemmaM1}}
  Indeed, ${\cal R}F_n^h(x)>1$ if and only if $f(x)=(n-h-2x)(n+1-h-2x)-(x+1)(x+1+h)>0$. Basic computations give $f'(x)=-4n+6x+3h-4$, $f(0)=(n-h)^2+n-2h-1>0$ if $n\geq 1$, $f(\lfloor \frac {n+1-h}2\rfloor)=-(\lfloor \frac {n+1-h}2\rfloor+1)(\lfloor \frac {n+1-h}2\rfloor+1+h)<0$. This implies that exists a unique value of $x\in [0,\lfloor\frac{n+1-h}2\rfloor]$: $\tilde m$ such that $f(x)>0$ if and only $x<\tilde m$.

  The relation ${\cal R}F_n^h(x)>1$ if and only if $f(x)>0$ with ${\cal R}F_n^h(m)=F_n^h(m+1)/F_n^h(m)$ gives $M=\lceil \tilde m \rceil$. Solving $f(\tilde m)=0$ in $[0,n]$ gives:
  \[\tilde m=\frac {2(n+1)}3 -\frac h 2 -\frac{\sqrt{4n^2+20n+28-3h^2}}6 \leq \frac{n-h}3 \leq \frac{n+1-h}3,\]
  so $\tilde m=1$ corresponds to $h=n-\frac 1 2 + \frac{\sqrt{8n+9}}2$ which implies $M\geq 2$ when $n-\frac 1 2 + \frac{\sqrt{8n+9}}2$ because $\frac{\delta \tilde m}{\delta h}<0$.



  Now suppose that $\epsilon\geq 0$, we have:
  \[{\cal R}F_n^h(\tilde m+\epsilon)=\frac{\left(1-\frac{2\epsilon}{n-h-2\tilde m}\right)\left(1-\frac{2\epsilon}{n+1-h-2\tilde m}\right)}{\left(1+\frac{\epsilon}{\tilde m+1}\right)\left(1+\frac{\epsilon}{\tilde m+h+1}\right)} \geq 1 -\frac{6\epsilon}{\tilde m} \]
  so when $0\leq i < \frac{\tilde m}6$, we get $\frac{F_n^h(M+i)}{F_n^h(M)}\geq 1-\frac{3i(i+1)}{\tilde m}$ which gives us $F_n^h=F_n^h(M)\Omega(\sqrt{\tilde m})=F_n^h(M)\Omega(\sqrt M)$ when $M\geq 2$.
\end{proof}
\begin{proof}{Proof of Lemma~\ref{lemmaM2} when $m\leq M-2$ or $m\geq M+1$}
  We have ${\underline g}(-h-1)=k(n+3+h)(n+2+h)>0$ and 
  \[{\underline g}\left(\frac{n-h}2\right)=-\frac{n+h+2}2\left((k+1)M+\alpha- \frac{n-h}2\right)<0\]
  as $(k+1)M\geq (k+1)\tilde m\geq \frac{n+1-h}2$.
  
  We have:
  \[{\underline g}'(x)=(8k+2)x-(4n+2-4h+M)k-M-\alpha+h+1,\]
  so ${\underline g}$ has a unique root $\tilde x$ in $\left[-h-1,\frac{n-h}2\right[$, and we have ${\underline g}(x)>0$ if $x<\tilde x$ and ${\underline g}(x)<0$ if $x>\tilde x$.

  We have $\alpha=max(k-1,\lfloor k-k(M-\tilde m)\rfloor)$, so:
  \[ (k+1)\tilde m+k-1\leq M+k(\tilde m+1)-1\leq (k+1)M+\alpha\leq M+k(\tilde m+1)< (k+1)\tilde m+k+1.\]
  But:
  \begin{eqnarray*}
    {\underline g}(\tilde m)&=&k\,f(\tilde m)-(\tilde m+1+h)((k+1)M+\alpha-\tilde m)+k(\tilde m+1+h)(\tilde m+1)\\
    &=&(\tilde m+1+h)((k+1)\tilde m+k-((k+1)M+\alpha))
  \end{eqnarray*}
  and
  \[ g_1(x)={\underline g}(x+1)-{\underline g}(x)=-((k+1)M+\alpha)-2(2n-2h-4x-1)+h+2x+2.\]
  Therefore when $h<n-1$ (which is always true) as $4h-4n+7\tilde m+1<0$:
  \begin{eqnarray*}
    g(\tilde m+1)&=&g(\tilde m)+g_1(\tilde m)\leq (4h-4n+7\tilde m+1)k+2h+2\tilde m+4\\
    &\leq & \frac{-10\tilde m^2+(15n-11h+13)\tilde m-(4n-4h-1)(n-h+1)+4\,f(\tilde m)}{2\tilde m}\\
    &=& \frac{2\tilde m^2 + (h - n - 3)\tilde m - 5h + n - 3}{2\tilde m}
  \end{eqnarray*}
  which is negative when $n\geq 9$, because $\tilde m\leq \frac{n-h}3 \leq \frac{n-h+3+\sqrt{(n-h)^2+34h-2n+33}}4$. Finally, when $n\leq 8$, we check that $2\tilde m^2 + (h - n - 3)\tilde m - 5h + n - 3<0$ for all possible values of $h$.

  Similarly, as $4n-4h-7\tilde m+7>0$, we have:
  \begin{eqnarray*}
    g(\tilde m-1)&=&g(\tilde m)-g_1(\tilde m-1)\geq (4n-4h-7\tilde m+7)k-2h-2\tilde m\\
    &\geq & \frac{-4\tilde m^2-(7n-3h+7)\tilde m+(4n-4h+7)(n+1-h)-4\,f(\tilde m)}{2\tilde m}\\
    &=& \frac{-16\tilde m^2+9(n-h+1)\tilde m-3h+7n+11}{2\tilde m}\\
  \end{eqnarray*}
  which is positive because $\tilde m\leq \frac{n-h}3 \leq \frac {9n-9h+9+ \sqrt{81 (n-h)^2 - 354 h + 610n + 785}}{32}$.
  
  So we get ${\underline g}(M-2)\geq {\underline g}(\tilde m-1)>0$ and ${\underline g}(M+1)\leq {\underline g}(\tilde m+1)<0$. This is equivalent to ${\cal R}G_M^{k,\alpha}(m)>1$ if $m\leq M-2$ and ${\cal R}G_M^{k,\alpha}(m)<1$ if $m\geq M+1$.
\end{proof}
\begin{proof}{Proof of Lemma~\ref{lemmaM2} when $m=M$ or $m=M+1$}
  Note first that when $B_M^{k,\alpha}(M)=B_M^{k,\alpha}(M-1)$, ${\cal R}G_M^{k,\alpha}(M-1)={\cal R}F_n^h(M-1)>1$ while when $B_M^{k,\alpha}(M)=B_M^{k,\alpha}(M+1)$, ${\cal R}G_M^{k,\alpha}(M)={\cal R}F_n^h(M)\leq 1$.

  We also have as $0\leq \alpha \leq k-1$:
  \[ \frac{k^2(n+3-h-2M)(n+2-h-2M)(M+1)}{(M+h)(k\,M+\alpha+1)(k\,M+\alpha)} \geq \frac{k(n+3-h-2M)(n+2-h-2M)}{(M+h)(k\,M+\alpha)}, \]
  and:
  \[ \frac{k^2(n+1-h-2M)(n-h-2M)M}{(M+h+1)(k\,M+\alpha+1)(k\,M+\alpha)} \leq \frac{k(n+1-h-2M)(n-h-2M)}{(M+h+1)(k\,M+1+\alpha)}, \]
  which turn into equalities when $k=1$ and $\alpha=0$.
  Moreover, when $k=1$, we have $\alpha=0$ and $B_M^{k,\alpha}(M)=B_M^{k,\alpha}(M-1)=B_M^{k,\alpha}(M+1)$ ; we can thus conclude directly.

  We can now note that $\alpha=max(k-1,\lfloor k-k(M-\tilde m)\rfloor)$ implies that $\tilde m + \frac{k-1-\alpha}k \leq M\leq \tilde m + \frac{k-\alpha}k$.

  When $B_M^{k,\alpha}(M)=B_M^{k,\alpha}(M+1)$ (which is equivalent to $(kM+\alpha+1)(kM+\alpha)\geq k^2M(M+1)$ and implies $\alpha>0$), we have:
  \[ {\cal R}{\overline G}^{k,\alpha}_M(M-1) \geq \frac{k(n+3-h-2M)(n+2-h-2M)}{(M+h)(k\,M+\alpha)}\]
  which is greater than $1$ if $g_1(M)=k(n+3-h-2M)(n+2-h-2M)-(M+h)(k\,M+\alpha)>0$.  We have $g_1'(x)=(3h-4n+6x-10)k-\alpha<0$ if $x<\frac{n+1-h}2+\frac{k(n+7)+\alpha}{6k}$.
  Therefore:
  \begin{eqnarray*}
    g_1(M)&\geq&g_1\left(m+\frac{k-\alpha}k\right)\\
    &=&\alpha\frac{k(7\sqrt{-3h^2+4n^2+20n+28}-3h-4n-10)+24\alpha} {6k}\\
    &\geq & \alpha\frac{7\sqrt{-3h^2+4n^2+20n+28}-3h-4n-10} 6 >0
  \end{eqnarray*}
  when $h\leq n+1 < \frac{-2n-5+7\sqrt{16n^2+80n+113}}{26}$.

  When $B_M^{k,\alpha}(M)=B_M^{k,\alpha}(M-1)$ (which implies $\alpha<k-1$), we have:
  \[ {\cal R}{\overline G}^{k,\alpha}_M(M) \leq \frac{k(n+1-h-2M)(n-h-2M)}{(M+h+1)(k\,M+1+\alpha)}\]
  which is less than or equal to $1$ if $g_1(M)=k(n+1-h-2M)(n-h-2M)-(M+1+h)(k\,M+\alpha+1)\leq 0$.  We have $g_1'(x)=(3h-4n+6x-3)k-\alpha-1<0$ if $x<\frac{n+1-h}2+\frac{k\,n+\alpha+1}{6k}$.
  Therefore:
  \begin{eqnarray*}
    g_1(M)&\leq&g_1\left(m+\frac{k-\alpha-1}k\right)\\
    &=&-(k-\alpha-1)\frac{k(7\sqrt{-3h^2+4n^2+20n+28}-3h-4n-34)+24+24\alpha} {6k}\\
    &\leq & -(k-\alpha-1)\frac{7\sqrt{-3h^2+4n^2+20n+28}-3h-4n-34} 6<0
  \end{eqnarray*}
  when $h\leq n+1<\frac{-2n-17+7\sqrt{16n^2+64n+25}}{26}$ and $n>0$.
\end{proof}

\end{document}